\theoremstyle{plain}
\newtheorem{theorem}{Theorem}[section]
\newtheorem{lemma}[theorem]{Lemma}
\newtheorem{conjecture}[theorem]{Conjecture}
\newtheorem{proposition}[theorem]{Proposition}
\renewcommand{\div}{\operatorname{div}}
\newcommand{\tr}{\operatorname{tr}}
\newcommand{\RN}{Reissner-Nordstr\"om}
\newcommand{\R}{\mathbb{R}}
\newcommand{\E}{\mathbf{E}}
\newcommand{\B}{\mathbf{B}}
\renewcommand{\t}{\mathbf{t}}
\newcommand{\Rb}{\bar R}
\newcommand{\g}{\bar g}
\newcommand{\Db}{\overline\nabla}
\newcommand{\muym}{\mu_{\text{\tiny YM}}}
\newcommand{\madm}{m_{\text{\tiny ADM}}}
\newcommand{\md}{m_{\text{\tiny D}}}
\begin{document}

\title[Extensions of the charged Riemannian Penrose inequality]
{Extensions of the charged \\ Riemannian Penrose inequality}

\author[Khuri]{Marcus Khuri}
\address{Department of Mathematics\\
Stony Brook University\\
Stony Brook, NY 11794, USA}
\email{khuri@math.sunysb.edu}

\author[Weinstein]{Gilbert Weinstein}
\address{Physics Department and Department of Computer Science and Mathematics\\
Ariel University of Samaria\\
Ariel, 40700, Israel}
\email{gilbertw@ariel.ac.il}

\author[Yamada]{Sumio Yamada}
\address{Department of Mathematics\\
Gakushuin University\\
Tokyo, 171-8588, Japan}
\email{yamada@math.gakushuin.ac.jp}

\thanks{M. Khuri acknowledges the support of
NSF Grants DMS-1007156 and DMS-1308753. S. Yamada acknowledges the support of
JSPS Grants 23654061 and 24340009.}

\begin{abstract}
In this paper we investigate the extension of the charged Riemannian
Penrose inequality to the case where charges are present outside the horizon. We
prove a positive result when the charge densities are compactly supported, and present a
counterexample when the charges extend to infinity. We also
discuss additional extensions to other matter models.
\end{abstract}
\maketitle

\section{Introduction} \label{intro}

In~\cite{khuriweinsteinyamadaCRPI}, we proved the Riemmanian Penrose inequality with
charge for multiple black holes.

\begin{theorem} \label{CRPI}
Let $(M,g,E,B)$ be a strongly asymptotically flat initial data
set for the Einstein-Maxwell equations with outermost minimal
surface boundary of area $A=4\pi \rho^2$, with ADM mass $m$, and total charge $q$,
satisfying the charged dominant energy
condition and the Maxwell constraints without charged matter. Then
\begin{equation} \label{upper}
 \rho \leq m + \sqrt{m^2-q^2},
\end{equation}
with equality if and only if the data set arises as
the canonical slice of the Reissner-Nordstr\"{o}m spacetime.
\end{theorem}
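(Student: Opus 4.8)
\emph{Proof proposal.} The plan is to run a charged analogue of Bray's conformal flow of metrics. Write $(M_0,g_0,E_0,B_0)=(M,g,E,B)$; after a global electromagnetic duality rotation (which preserves $|E|^2+|B|^2$ and the Maxwell constraints and normalizes the total magnetic charge to zero) the hypotheses read $\Rb\!{}_{g_0}\geq 2(|E_0|^2+|B_0|^2)$, $\div_{g_0}E_0=\div_{g_0}B_0=0$, and the boundary $\Sigma_0$ is the outermost minimal surface of $g_0$. I would then define a one–parameter family $g_t=u_t^4 g_0$ with $u_0\equiv 1$, where $u_t>0$ is determined by an elliptic problem on the region outside a flowing "horizon" $\Sigma_t$ (the outermost minimal surface of $g_t$), with $u_t=0$ on $\Sigma_t$ and $u_t\to 1$ at spatial infinity, the surfaces $\Sigma_t$ moving strictly outward. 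From Bray's construction one imports long–time existence of the flow, and the fact that the area $A(\Sigma_t)$ measured in $g_t$ is constant, equal to $4\pi\rho^2$.

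Next I would prescribe how the electromagnetic field flows, by keeping the Hodge–dual two–forms $\star E$ and $\star B$ \emph{fixed} along the deformation. In three dimensions this choice has two purely topological virtues: the flux of $E_t$ through any surface enclosing $\Sigma_t$, and in particular the total charge $q$, is unchanged (this is where the absence of charged matter is used), and the Maxwell constraint $\div_{g_t}E_t=0$ persists. The delicate point is the energy condition: since $|E_t|^2_{g_t}=u_t^{-8}|E_0|^2_{g_0}$ while $\Rb\!{}_{g_t}=u_t^{-5}\big(\Rb\!{}_{g_0}u_t-8\Delta_{g_0}u_t\big)$, the vacuum choice "$u_t$ harmonic" gives only $\Rb\!{}_{g_t}=u_t^{-4}\Rb\!{}_{g_0}$, which fails to dominate $2(|E_t|^2+|B_t|^2)_{g_t}$ when $u_t\leq 1$; one must instead take $u_t$ to solve a semilinear equation with a charge–dependent lower order term (morally $8\Delta_{g_0}u_t=\Rb\!{}_{g_0}u_t-2u_t^{-3}(|E_0|^2+|B_0|^2)$ outside $\Sigma_t$), and check that this modified conformal flow still exists for all time, still keeps $A(\Sigma_t)$ constant, and has its $t\to\infty$ limit equal to a genuine \RN{} slice of charge $q$ rather than a Schwarzschild slice. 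I expect this reconciliation — of the energy condition, the area constancy, and the charge conservation, with a \RN{} limit — to be the main obstacle.

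Granting a well-behaved flow, the heart of the matter is a monotonicity statement: following Bray, I would show $m(t):=\madm(g_t)$ is non-increasing, the proof using the reflection of $g_t$ across $\Sigma_t$ together with a positive–mass–type theorem — and in the charged setting the required input is the \emph{positive mass theorem with charge}, so that the rigid model space is \RN{} (which in particular yields $m(t)\geq |q|$ along the way). Combining $m(t)$ non-increasing with $q$ constant and the elementary monotonicity of $x\mapsto x+\sqrt{x^2-q^2}$ on $[\,|q|,\infty)$, and using that the outermost minimal surface of the limiting \RN{} data is its horizon, for which $\rho_\infty=m_\infty+\sqrt{m_\infty^2-q^2}$ holds exactly, one concludes
\begin{equation*}
 \rho \;=\; \rho(\Sigma_0) \;=\; \rho(\Sigma_\infty) \;=\; m_\infty+\sqrt{m_\infty^2-q^2} \;\leq\; m+\sqrt{m^2-q^2}.
\end{equation*}

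For the rigidity clause I would trace the case of equality in \eqref{upper} backwards: it forces $m(t)$ to be constant and every intermediate inequality — in particular the one supplied by the charged positive mass theorem at each time — to be saturated, whence the rigidity part of that theorem identifies each $g_t$, and hence $g_0$, as a canonical slice of \RN{}, the parameters being pinned down by the area and the charge. Finally, it is worth recording the alternative route available when $\Sigma$ is connected: run inverse mean curvature flow and monitor the charged Hawking mass $\sqrt{|\Sigma|/16\pi}\,\big(1-\tfrac{1}{16\pi}\int_\Sigma H^2\,dA+4\pi q^2/|\Sigma|\big)$, whose monotonicity under weak IMCF uses precisely $\Rb\geq 2(|E|^2+|B|^2)$, $\div E=\div B=0$, and the absence of charged matter (so $q(\Sigma)\equiv q$); this gives \eqref{upper} with $\rho$ the area radius of a single component, which is exactly why the conformal flow above is needed for the multiple–black–hole statement.
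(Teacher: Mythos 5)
Your overall strategy --- a charged conformal flow with $E_t=u_t^{-6}E$ (equivalently, holding the flux two-forms fixed), mass monotonicity via a doubling argument and the positive mass theorem with charge, and a \RN\ limit --- is exactly the strategy behind the paper's Theorem~\ref{thm-IMCF}, but run unconditionally it cannot prove Theorem~\ref{CRPI}, and this is the gap. Your concluding chain $\rho=\rho(\Sigma_\infty)=m_\infty+\sqrt{m_\infty^2-q^2}$ already forces $\rho\geq m_\infty\geq|q|$, and indeed the flow argument, when it works, proves the stronger two-sided estimate $m\geq\frac12\left(\rho+\frac{q^2}{\rho}\right)$, i.e.\ \eqref{mCRPI}. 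But \eqref{mCRPI} is \emph{false} when $\rho<|q|$: the paper cites a Majumdar--Papapetrou based counterexample from \cite{weinsteinyamada}. Consequently no version of the conformal flow can converge to a sub-extremal \RN\ slice whose horizon has area $4\pi\rho^2$ in that regime, and the hypothesis $|q|\leq\rho$ is a genuine auxiliary input to the flow step (tied to the area-charge inequality \eqref{n-bh} and to the non-monotonicity of $\rho\mapsto\frac12(\rho+q^2/\rho)$), not a technicality that can be absorbed.

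The paper's proof of Theorem~\ref{CRPI} is therefore a two-case argument, and you are missing the easy but essential first case: if $|q|\geq\rho$, the positive mass theorem with charge \cite{GibbonsHawkingHorowitzPerry} gives $m\geq|q|$, whence $\rho\leq|q|\leq m\leq m+\sqrt{m^2-q^2}$ directly, equality forcing $\rho=|q|=m$ (extremal \RN). Only in the complementary case $|q|\leq\rho$ does one invoke the conformal flow, i.e.\ Theorem~\ref{thm-IMCF}, whose conclusion \eqref{mCRPI} is equivalent to $m-\sqrt{m^2-q^2}\leq\rho\leq m+\sqrt{m^2-q^2}$ and in particular yields \eqref{upper} together with the rigidity statement. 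Your remaining technical sketch of the flow (the semilinear equation for $u_t$, charge conservation, the role of the charged positive mass theorem in the monotonicity, and the inverse mean curvature flow alternative for connected $\Sigma$) is consistent with \cite{khuriweinsteinyamadaCRPI}, but without the case split the proof does not close.
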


Here $M$ is a three dimensional manifold, $g$ a Riemannian metric on $M$, $E$ and $B$
vector fields on
$M$, and $\rho$ is called the \emph{area radius} of the outermost minimal surface.
When $|q|\geq \rho$, the positive
mass theorem with charge~\cite{GibbonsHawkingHorowitzPerry}, $m\geq
|q|$, immediately implies~\eqref{upper}.
Thus to prove Theorem~\ref{CRPI}, it was sufficient to prove~\eqref{upper}
when $|q|\leq \rho$, which was accomplished with the following result.

\begin{theorem} \label{thm-IMCF}
Let $(M,g,E,B)$ be a strongly asymptotically flat initial data set
for the Einstein-Maxwell equations with outermost minimal
surface boundary of area $A=4\pi \rho^2$, with ADM mass $m$, and total charge $q$,
satisfying the charged dominant energy
condition and the Maxwell constraints without charged matter. If $|q|\leq \rho$, then
\begin{equation} \label{mCRPI}
  m\geq \frac12\left( \rho + \frac{q^2}{\rho} \right),
\end{equation}
with equality if and only if the data set arises
as the canonical slice of the Reissner-Nordstr\"{o}m spacetime.
\end{theorem}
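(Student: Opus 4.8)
The plan is to run Huisken--Ilmanen's weak inverse mean curvature flow while monitoring a charge-corrected Geroch mass. Observe first that the data $(M,g,E,B)$ is time-symmetric, so the Hamiltonian constraint and the charged dominant energy condition give $R\ge 2(|E|^2+|B|^2)$, where $R$ denotes the scalar curvature of $g$; and since the Maxwell constraints carry no charged matter, $\div E=\div B=0$, so by the divergence theorem the electric and magnetic fluxes through any surface homologous to $\partial M$ are the fixed constants $4\pi q_e$ and $4\pi q_m$, with $q^2=q_e^2+q_m^2$. For a closed surface $\Sigma$ set
\[
\mathfrak m(\Sigma):=m_{\mathrm G}(\Sigma)+q^2\sqrt{\frac{\pi}{|\Sigma|}},\qquad m_{\mathrm G}(\Sigma):=\sqrt{\frac{|\Sigma|}{16\pi}}\left(1-\frac{1}{16\pi}\int_{\Sigma}H^2\right),
\]
$m_{\mathrm G}$ being the usual Geroch mass. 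A short computation shows $\mathfrak m\equiv m$ along the coordinate spheres of the canonical slice of \RN{}, and $\mathfrak m=\tfrac12\big(\rho+q^2/\rho\big)$ on a minimal surface of area radius $\rho$ (where $H\equiv0$) --- the right-hand side of \eqref{mCRPI}.

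Next I would let $\{\Sigma_t\}_{t\ge0}$ be the weak IMCF issuing from the outermost minimal surface $\Sigma_0=\partial M$, so that $|\Sigma_t|=|\Sigma_0|e^{t}$, and prove the monotonicity $\tfrac{d}{dt}\mathfrak m(\Sigma_t)\ge0$. On a smooth, connected level set the IMCF evolution equations and the Gauss equation yield the Geroch identity
\[
\frac{d}{dt}m_{\mathrm G}(\Sigma_t)=\frac{|\Sigma_t|^{1/2}}{(16\pi)^{3/2}}\left[\left(8\pi-\int_{\Sigma_t}2K\right)+\int_{\Sigma_t}\left(2\frac{|\nabla H|^2}{H^2}+|\mathring A|^2+R\right)\right],
\]
with $K$ the Gauss curvature and $\mathring A$ the trace-free second fundamental form. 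Gauss--Bonnet makes the first bracket $\ge0$, and it vanishes when $\Sigma_t$ is a sphere; discarding the remaining nonnegative integrands and using $R\ge2(|E|^2+|B|^2)$ together with the Cauchy--Schwarz bound $\int_{\Sigma_t}(|E|^2+|B|^2)\ge 16\pi^2q^2/|\Sigma_t|$ gives $\tfrac{d}{dt}m_{\mathrm G}(\Sigma_t)\ge\tfrac{q^2}{2}\sqrt{\pi/|\Sigma_t|}$, which exactly cancels $\tfrac{d}{dt}\big(q^2\sqrt{\pi/|\Sigma_t|}\big)=-\tfrac{q^2}{2}\sqrt{\pi/|\Sigma_t|}$. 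Hence $\mathfrak m(\Sigma_t)$ is non-decreasing; since $\mathfrak m(\Sigma_0)=\tfrac12\big(\rho+q^2/\rho\big)$ and the charge term tends to $0$, the Huisken--Ilmanen blow-down estimate yields $\tfrac12\big(\rho+q^2/\rho\big)\le\lim_{t\to\infty}\mathfrak m(\Sigma_t)\le\madm=m$, which is \eqref{mCRPI}.

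To make this rigorous one passes, as in Huisken--Ilmanen, to the level-set formulation via elliptic regularization and handles the exceptional (jump) times: across a jump $\Sigma\mapsto\Sigma^+$ to the outer-minimizing hull one has $|\Sigma^+|=|\Sigma|$, $\int_{\Sigma^+}H^2\le\int_{\Sigma}H^2$, and unchanged enclosed charge, so $\mathfrak m$ again does not decrease; the hypothesis $|q|\le\rho$ is invoked in this technical analysis, and in any event it is automatic in the equality case, since \RN{} requires $|q|\le r_+=\rho$. I expect two genuinely hard points. The first is connectedness of the level sets, on which the Gauss--Bonnet bound $\int_{\Sigma_t}2K\le8\pi$ rests: when $\partial M$ is disconnected (several black holes) this breaks down, and one must then supplement or replace the flow --- for instance by a Bray-type conformal flow --- to obtain \eqref{mCRPI} with $\rho$ the area radius of the full outermost minimal surface. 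The second is the rigidity statement: equality forces equality all the way up the chain of inequalities, so the $\Sigma_t$ must be umbilic constant-mean-curvature round spheres with $R=2(|E|^2+|B|^2)$ and with $E,B$ normal to the $\Sigma_t$ of constant length; this constrains $g$ to a warped product $\varphi(r)^2\,dr^2+r^2\,d\Omega^2$ with $|E|=q/r^2$, and integrating the resulting constraint ODE subject to the minimal-boundary condition identifies the data as the canonical slice of \RN{}.
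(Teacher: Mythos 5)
Your computation is sound as far as it goes: the quantity $\mathfrak m(\Sigma)=m_{\mathrm G}(\Sigma)+q^2\sqrt{\pi/|\Sigma|}$ is precisely the charged Hawking mass~\eqref{chm} written in terms of area, your Geroch identity and the cancellation between the Cauchy--Schwarz bound $\int_{\Sigma_t}(|E|^2+|B|^2)\ge 16\pi^2q^2/|\Sigma_t|$ and the derivative of the correction term both check out, and the jump analysis is unchanged from the uncharged case since area and enclosed flux are preserved across jumps. This is exactly the Huisken--Ilmanen/Jang route~\cites{huiskenilmanen2001,jang1979} that the paper mentions in Section~\ref{section-RPI}, and it proves the theorem \emph{when the outermost minimal surface is connected}. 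However, the theorem is stated for a possibly disconnected boundary, and there your argument genuinely fails: the Gauss--Bonnet step $\int_{\Sigma_t}2K\le 8\pi$ requires connected level sets, and flowing from a single component loses the area of the others. You flag this, but deferring to ``a Bray-type conformal flow'' leaves the actual content of the paper's proof unsupplied --- the authors prove Theorem~\ref{thm-IMCF} via a charged conformal flow adapted from~\cite{bray2001} and carried out in~\cite{khuriweinsteinyamadaCRPI}, where one must verify that the flow preserves the charged dominant energy condition and the divergence-free condition, and that the mass monotonicity in the doubling argument survives the coupling to $(E,B)$. That adaptation, not the IMCF monotonicity, is the nontrivial step.

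Relatedly, your placement of the hypothesis $|q|\le\rho$ is off. In the connected case it is automatic from stability of the outermost horizon (inequality~\eqref{n-bh} with $N=1$) and is never invoked in the weak-flow or jump analysis. It is needed precisely in the disconnected case your argument does not cover, where the function $\rho\mapsto\tfrac12(\rho+q^2/\rho)$ must be non-decreasing in $\rho$ along the conformal flow --- this fails for $\rho<|q|$, and the Majumdar--Papapetrou-based counterexample of~\cite{weinsteinyamada} (necessarily with a multi-component horizon, again by~\eqref{n-bh}) shows the hypothesis cannot be dropped. Your rigidity sketch is plausible in outline for a connected boundary, but upgrading ``equality at almost every $t$'' to the warped-product structure requires the weak-flow regularity theory of~\cite{huiskenilmanen2001}, and in the disconnected setting rigidity must again come from the conformal flow.
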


We proved this theorem using a conformal flow method adapted
from~\cite{bray2001}. In~\cite{weinsteinyamada}, a counterexample
to~\eqref{mCRPI} based on the Majumdar-Papapetrou solutions was constructed when
$\rho<|q|$, showing that the hypothesis $|q|\leq\rho$ is necessary in
Theorem~\ref{thm-IMCF}.
In this paper we examine whether the hypothesis  $\div E=\div B=0$, i.e.\ the absence of
charges outside the horizon, is necessary. We will prove the following two theorems;
a positive and a negative result.

\begin{theorem} \label{CRPI-charges}
Let $(M,g,E,B)$ be a strongly asymptotically flat initial data set
for the Einstein-Maxwell equations with outermost minimal
surface boundary of area $A=4\pi \rho^2$, with ADM mass $m$, and total charge $q$,
satisfying the charged dominant energy
condition. If $\div E$ and $\div B$ are compactly
supported, then~\eqref{upper} holds.
\end{theorem}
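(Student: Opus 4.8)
The plan is to carry the strategy behind Theorems~\ref{thm-IMCF} and~\ref{CRPI} through in the presence of compactly supported charges; a direct reduction to Theorem~\ref{CRPI} seems unavailable, since one cannot in general delete the charged matter without either altering the total charge or destroying the minimal surface. I would begin with the same reduction to a mass estimate as in the source-free case. When $|q|\ge\rho$, the positive mass theorem with charge gives $m\ge|q|\ge\rho$, hence $\rho\le|q|\le m\le m+\sqrt{m^2-q^2}$ and \eqref{upper} holds; here it matters that the \emph{charged} dominant energy condition is assumed, since it controls $\div E$ and $\div B$ by the matter energy density and so keeps the positive mass theorem with charge valid in the presence of charged matter. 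When $|q|\le\rho$, inspecting the roots of $\rho^2-2m\rho+q^2$ shows that \eqref{upper} is equivalent to the sharp bound \eqref{mCRPI}, so the whole problem reduces to establishing \eqref{mCRPI} in the regime $|q|\le\rho$.

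For this I would run the conformal flow $g_t=u_t^{4}g$ of~\cite{bray2001}, coupled to the induced evolution of $(E,B)$ as in~\cite{khuriweinsteinyamadaCRPI}, but now carrying the charged matter along. As there, one arranges that the area radius $\rho(t)$ of the outermost minimal surface $\Sigma_t\subset(M,g_t)$ stays constant, that the charged dominant energy condition is preserved (schematically $R_{g_t}\ge 2(|E_t|^2+|B_t|^2)$, with the matter densities and $\div E_t,\div B_t$ remaining compactly supported), and that the ADM charge stays equal to $q$. Compact support enters precisely at this last point: the elliptic problem generating the flow is homogeneous away from $\Sigma_t$, so the source densities never spread toward infinity and the flux of $E_t,B_t$ there is undisturbed. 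One then shows that $m(t)$ is monotone non-increasing and that $(M\setminus\Sigma_t,g_t,E_t,B_t)$ converges, as $t\to\infty$, to the canonical slice of the \RN\ spacetime with charge $q$ and horizon of area radius $\rho$; therefore $m=m(0)\ge m(\infty)=\tfrac12\bigl(\rho+q^2/\rho\bigr)$, which is \eqref{mCRPI}.

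The crux --- and what I expect to be the main obstacle --- is the monotonicity $m'(t)\le 0$ over the finite stretch of the flow before $\Sigma_t$ has engulfed the whole matter support; afterwards the data is source-free outside $\Sigma_t$ and the argument is exactly that of~\cite{khuriweinsteinyamadaCRPI}. In the source-free flow, $m'(t)\le 0$ follows from the maximum principle, using that the function generating the flow is harmonic outside $\Sigma_t$ and that the enclosed charge $q(\Sigma_t)$ is constant. With charged matter neither holds throughout, and the estimate for $m'(t)$ picks up extra terms involving the matter energy density and $\tfrac{d}{dt}q(\Sigma_t)$. The claim to be verified is that the charged dominant energy condition renders the net contribution non-positive: via a Cauchy--Schwarz bound relating $\int_{\Sigma_t}(|E_t|^2+|B_t|^2)$, $q(\Sigma_t)$ and $\tfrac{d}{dt}q(\Sigma_t)$, the surplus matter energy should exactly absorb the loss caused by the variation of the enclosed charge. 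Verifying this balance is the heart of the proof; the counterexample stated as the second main theorem shows that it genuinely fails once the charge density reaches infinity.
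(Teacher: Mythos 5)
Your reduction to \eqref{mCRPI} in the regime $|q|\le\rho$ matches the paper's setup, and your closing remark that once $\Sigma_t$ has engulfed the support of the charge densities the argument of \cite{khuriweinsteinyamadaCRPI} applies verbatim is in fact the key observation. But the proposal then leaves its self-declared ``heart'' unproven: you assert that the mass monotonicity along the conformal flow acquires extra terms from the matter density and from $\tfrac{d}{dt}q(\Sigma_t)$, and that a Cauchy--Schwarz balance ``should'' absorb them, without verifying anything. As written this is a gap, not a proof.

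More importantly, the difficulty is misplaced. In \cite{khuriweinsteinyamadaCRPI} the hypothesis $\div E=\div B=0$ is not used to establish $m'(t)\le 0$ along the charged conformal flow; that monotonicity argument goes through unchanged with charged matter present, so there is nothing to re-derive on the ``finite stretch'' you single out. The only place the divergence-free condition enters is at the very end, in the inverse mean curvature flow step, where constancy of the enclosed charge is needed for the Geroch-type monotonicity of the charged Hawking mass. The paper's proof is therefore a short localization argument: by the exhaustion result the flow surfaces $\Sigma_t$ eventually become connected and enclose any large coordinate sphere, in particular one containing the supports of $\div E$ and $\div B$; from that time on the exterior data is source-free and the IMCF argument applies as before. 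In other words, the step that genuinely fails without compact support --- and which the counterexample of Theorem~\ref{counterexample} exploits through a decreasing charged Hawking mass --- is precisely the one you defer to ``exactly that of \cite{khuriweinsteinyamadaCRPI}.'' A secondary point: your justification that the charged dominant energy condition ``controls $\div E$ and $\div B$ by the matter energy density'' is not what the condition $R\ge 2(|E|^2+|B|^2)$ says, so the applicability of the positive mass theorem with charge in the presence of sources needs an argument rather than an assertion.
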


\begin{theorem} \label{counterexample}
There is a spherically symmetric counterexample to~\eqref{upper} with $B=0$, satisfying
all the conditions of Theorem~\ref{CRPI-charges} except that $\div E$ is not
compactly supported.
\end{theorem}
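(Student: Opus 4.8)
The plan is to produce the counterexample by hand inside the class of static, spherically symmetric, time-symmetric data. On $M=[r_0,\infty)\times S^2$ I would take
$$g=\frac{dr^2}{V(r)}+r^2\,d\Omega^2,\qquad V(r)=1-\frac{2m(r)}{r}+\frac{Q(r)^2}{r^2},$$
$$E=\frac{Q(r)}{r^2}\sqrt{V(r)}\,\partial_r,\qquad B=0,$$
with $m(r)$ and $Q(r)$ smooth and non-decreasing, $V>0$ on $(r_0,\infty)$ and $V(r_0)=0$. The last requirement forces $m(r_0)=(r_0^2+Q(r_0)^2)/(2r_0)$ and makes $\{r=r_0\}$ a minimal surface of area $4\pi r_0^2$, so $\rho=r_0$. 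A direct computation gives $|E|_g^2=Q(r)^2/r^4$, total charge $q=\lim_{r\to\infty}Q(r)$, ADM mass $m=\lim_{r\to\infty}m(r)$, $\div E=r^{-2}\sqrt{V}\,Q'(r)$, and --- writing $V=1-2(m(r)-Q(r)^2/(2r))/r$ --- scalar curvature
$$R_g=\frac{4m'(r)}{r^2}-\frac{4Q(r)Q'(r)}{r^3}+\frac{2Q(r)^2}{r^4}.$$
Thus the matter energy density equals $(R_g-2|E|_g^2)/16\pi=(m'-QQ'/r)/(4\pi r^2)$, so for this data the charged dominant energy condition is exactly $m'(r)\ge Q(r)Q'(r)/r$, while charges are absent outside the horizon if and only if $Q$ is locally constant; letting $Q$ reach its asymptotic value $q$ only in the limit $r\to\infty$ therefore yields data with $\div E$ not compactly supported.

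The crux is that this energy condition places no obstruction to hiding an arbitrarily large share of the total charge outside the horizon. Taking $m'(r)=Q(r)Q'(r)/r$ --- so the matter energy density vanishes, or one may keep it positive by adding to $m'$ a small compactly supported bump, small enough that the conclusion below survives --- I obtain
$$m-m(r_0)=\int_{r_0}^{\infty}\frac{Q(r)Q'(r)}{r}\,dr<\frac1{r_0}\int_{r_0}^{\infty}Q(r)Q'(r)\,dr=\frac{q^2-Q(r_0)^2}{2r_0},$$
with strict inequality as soon as $Q'\not\equiv0$ on $(r_0,\infty)$. Combined with $m(r_0)=(r_0^2+Q(r_0)^2)/(2r_0)$, this gives $m<(r_0^2+q^2)/(2r_0)$, which is algebraically equivalent to $r_0>m+\sqrt{m^2-q^2}$, i.e.\ to the failure of~\eqref{upper}. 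Hence \emph{every} admissible choice that accumulates some charge outside the horizon already violates~\eqref{upper}. A concrete instance: $r_0=1$, $Q\equiv\frac12$ on $[1,2]$ and then increasing smoothly and monotonically, with $Q(r)-q=O(r^{-1})$, to $q=\frac35$; this produces $m<\frac{17}{25}$ while $m\ge m(1)=\frac58>q$, so $m>|q|$ and $1=\rho>m+\sqrt{m^2-q^2}$. (Since $|q|<\rho$ here, the same example also contradicts~\eqref{mCRPI}, showing $\div E=0$ is needed in Theorem~\ref{thm-IMCF}.)

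Finally I would check the remaining hypotheses for such a choice. Because $Q$ is constant near $r_0$ the data there coincides with a \RN\ slice, so $V'(r_0)=(r_0^2-Q(r_0)^2)/r_0^3>0$ provided $Q(r_0)<r_0$, and the substitution $r-r_0\sim s^2$ then shows that $g$, $E$ extend smoothly across the minimal boundary. Strong asymptotic flatness is immediate from $V=1-2m/r+O(r^{-2})$ and $R_g=O(r^{-4})$ once $Q'=O(r^{-2})$. That $\{r=r_0\}$ is the \emph{outermost} minimal surface follows because any closed surface in $(r_0,\infty)\times S^2$ has, at a point where $r$ attains its minimum, mean curvature at least $2\sqrt{V}/r>0$ with respect to the outward normal, by comparison with the round sphere through that point; hence no further closed minimal surface exists. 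I expect the real content to lie not in any single estimate --- the reversal of~\eqref{upper} is automatic from the displayed integral inequality, and is robust under perturbation of $m$, $Q$ and under inserting genuine matter with positive energy density --- but in assembling all of these verifications together, the one conceptual point being precisely that the charged dominant energy condition, unlike compact support of $\div E$, does nothing to prevent charge from being concealed in the region outside the horizon.
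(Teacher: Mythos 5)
Your construction is at bottom the same as the paper's: spherically symmetric, time\nobreakdash-symmetric data with a radial electric field, for which the charged dominant energy condition collapses to a one-variable differential inequality, and the violation of~\eqref{upper} comes from the fact that charge held outside the horizon contributes less to the ADM mass than the inequality requires. Only the bookkeeping differs: the paper writes the metric in terms of the Hawking mass and arranges for the charged Hawking mass $m_c=m+q^2/2r$ (built from the \emph{total} charge) to be decreasing, whereas you use the \RN\ normal form $(m(r),Q(r))$ with the quasi-local charge $Q$, so that the energy condition becomes the clean statement $m'\ge QQ'/r$ and the violation drops out of a one-line integral estimate. Your formulas for $R_g$, $|E|^2$, $\div E$ and the enclosed charge are correct (they reproduce the paper's $R=4m'/r^2$ once one identifies the paper's $m$ with your Hawking mass $m-Q^2/2r$), your concrete example does have $\div E$ not compactly supported, and the reduction of a failure of~\eqref{mCRPI} to a failure of~\eqref{upper} is handled the same way as in the paper. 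One small slip: to show the boundary is outermost you should touch a putative closed minimal surface with the coordinate sphere through the \emph{maximum} of $r$, where the surface lies inside the sphere and the mean curvature comparison has the right sign; at the minimum the comparison goes the wrong way.

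The serious problem is that your argument proves too much. Nothing in the displayed integral estimate uses that $Q'$ fails to be compactly supported: taking $Q$ constant outside, say, $[2,3]$ and still setting $m'=QQ'/r$ produces data satisfying \emph{every} hypothesis of Theorem~\ref{CRPI-charges} --- strongly asymptotically flat (exactly \RN\ near the horizon and near infinity), outermost minimal boundary, $R=2|E|^2$, and $\div E$ supported in $[2,3]$ --- while the same estimate still gives $m<\tfrac12(\rho+q^2/\rho)$. Your sentence ``every admissible choice that accumulates some charge outside the horizon already violates~\eqref{upper}'' is therefore, as written, a disproof of Theorem~\ref{CRPI-charges}, and you cannot leave that standing without comment. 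Either there is an error in your reduction of the energy condition (I cannot locate one), or the charged dominant energy condition must be understood to include a pointwise bound of the charge density $\tfrac1{4\pi}\div E$ by the non-electromagnetic energy density $\tfrac1{16\pi}\bigl(R-2|E|^2\bigr)$ --- the hypothesis under which the charged positive mass theorem is proved --- which your zero-energy charged shell violates. Observe that the paper's own example does not face this dilemma: its choice $m'=\tfrac{q^2}{2r^2}(1-r_0/r)$ together with the energy condition forces the enclosed charge to satisfy $Q(r)^2\le q^2(1-r_0/r)<q^2$ at every finite radius, so there the charge density is \emph{necessarily} spread out to infinity. Until you either restrict your claim to that regime or explicitly identify and resolve the incompatibility with Theorem~\ref{CRPI-charges}, the proof cannot be accepted as written.
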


The existence of a spherically symmetric counterexample was conjectured in~\cite{malec}.
Note that in Theorem~\ref{CRPI-charges} the rigidity statement has been omitted. In
fact, we show that a counterexample to rigidity exists.

\begin{proposition} \label{norigidity}
There is a non-trivial arbitrarily small spherically symmetric perturbation of the \RN\
canonical slice $(M,g,E,0)$, which coincides with \RN\ outside of
an annulus, satisfies all the hypothesis of Theorem~\ref{CRPI-charges} in addition to
$|q|\leq\rho$, and
saturates inequality~\eqref{mCRPI}:
\begin{equation} \label{equatlity}
  m = \frac12\left( \rho + \frac{q^2}{\rho} \right).
\end{equation}
\end{proposition}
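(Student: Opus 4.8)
The plan is to construct the perturbation by a conformal-type deformation supported in an annulus $\{r_0 \le r \le r_1\}$ lying entirely outside the horizon of a fixed \RN\ canonical slice with $|q| < \rho$ (strict, so that there is room to work in the exterior region). On \RN\ the exterior metric is $g = U^{-2}\,dr^2 + r^2 g_{S^2}$ with $U = 1 - 2m/r + q^2/r^2 > 0$ on $r > r_+$, and the electric field is the Coulomb field $E = (q/r^2)\,\partial_r$ normalized so that $\div E = 0$ there. The idea is to keep the electric potential $\phi$ (with $E = \nabla\phi$) and the areal radius function unchanged outside the annulus, but to insert a bump into the metric and into the charge density inside the annulus in a way that preserves both the ADM mass $m$ and the horizon area $4\pi\rho^2$, while only \emph{adding} nonnegative charged matter so that the charged dominant energy condition continues to hold. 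Since $m$, $q$, and $\rho$ are all unchanged, equality \eqref{equatlity} is automatic; the content is in showing such a perturbation genuinely exists, is $C^2$-small, and respects all the hypotheses of Theorem \ref{CRPI-charges}.

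Concretely, I would work in spherical symmetry and write the metric as $g = (1 - 2m(r)/r + Q(r)^2/r^2)^{-1} dr^2 + r^2 g_{S^2}$ with a mass function $m(r)$ and charge function $Q(r)$, requiring $m(r) \equiv m$ and $Q(r) \equiv q$ for $r \le r_0$ and for $r \ge r_1$, but allowing $Q(r)$ to dip below $q$ and return, with a compensating excursion in $m(r)$. The Hamiltonian constraint in this gauge expresses the (charged) energy density as $2m'(r)/r^2$ minus the electromagnetic contribution, so the charged dominant energy condition becomes a pointwise inequality of the schematic form $m'(r) \ge Q(r) Q'(r)/r$ (up to the precise normalization coming from the Einstein–Maxwell constraints); the charge density is $\div E \sim Q'(r)/r^2$. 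Thus the task reduces to choosing a pair of smooth functions $(m(r), Q(r))$ on $[r_0, r_1]$, matching \RN\ to second order at both endpoints, with $Q \not\equiv q$, such that the energy-density inequality holds with strict slack — which is easy to arrange by making the $m(r)$ excursion dominate the $Q(r)$ excursion (take $Q$'s perturbation of size $\varepsilon$ and $m$'s of size $\varepsilon$ as well, but with the right sign so the inequality is strict). Scaling the bump amplitude to zero gives the "arbitrarily small" claim in $C^2$, and hence in any reasonable topology on the data.

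The main obstacle I anticipate is not any single estimate but the bookkeeping of the matching conditions: one must confirm that forcing $Q(r)$ and $m(r)$ back to their \RN\ values at $r_1$ (so that the solution is \emph{exactly} \RN\ outside, not merely asymptotically) is compatible with keeping the dominant energy condition strict throughout the annulus and keeping $U(r) > 0$ so that $r$ remains a good coordinate and no new minimal surface or horizon is created outside $r_+$. A clean way to handle this is to prescribe the charge density $j(r) := Q'(r)/r^2$ first as a smooth compactly-supported bump in $(r_0, r_1)$ with $\int j\, r^2\, dr = 0$ (so $Q$ returns to $q$), then \emph{define} the energy density to be $j$ times a large enough constant plus a nonnegative smooth bump, and integrate to recover $m(r)$, checking at the end that $m(r_1) = m$ forces one linear condition on the free bump which can always be met. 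One should also verify that the outermost-minimal-surface boundary is untouched — it is, since the deformation is supported away from $r_+$ — and that the resulting data is still strongly asymptotically flat, which is immediate because it is identically \RN\ near infinity. I expect the rigidity-failure conclusion then follows simply by observing that the perturbed data is not isometric to any \RN\ slice (its charge density is nonzero), yet saturates \eqref{mCRPI}.
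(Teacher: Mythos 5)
Your construction is essentially correct, but it follows a genuinely different route from the paper. The paper does not work with the constraint ODEs at all: it invokes Corvino's localized scalar curvature deformation to produce a compactly supported metric perturbation $h$ with $R_{\tilde g}=R_{g_0}+\varepsilon f$ for a chosen bump $f\geq 0$ (the kernel condition holds because $R_{g_0}$ is nonconstant on the annulus), and then uses the implicit function theorem to perturb $E$ so that $|\tilde E|^2_{\tilde g}=|E_0|^2_{g_0}$ pointwise; the charged dominant energy condition is then preserved for free, since the scalar curvature only increases while the field energy density is frozen. Your approach instead perturbs the charge function $Q(r)$ directly and compensates with the mass function, reducing everything to the pointwise inequality $m'\geq QQ'/r$ together with the endpoint matching $m(r_0)=m(r_1)$, $Q(r_0)=Q(r_1)=q$. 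This is more elementary and fully explicit (no Corvino machinery), but it carries one compatibility check that you assert rather than verify and that is genuinely sign-sensitive: writing $Q=q+\varepsilon\chi$ with $\chi$ supported in the annulus, integration by parts gives
\begin{equation*}
\int_{r_0}^{r_1}\frac{QQ'}{r}\,dr=\varepsilon q\int_{r_0}^{r_1}\frac{\chi}{r^2}\,dr+\frac{\varepsilon^2}{2}\int_{r_0}^{r_1}\frac{\chi^2}{r^2}\,dr,
\end{equation*}
and since $\int m'=0$ forces this quantity to be $\leq 0$, you must choose $\chi$ with $q\int\chi r^{-2}\,dr<0$ (e.g.\ $\chi\leq 0$ when $q>0$) and $\varepsilon$ small; only then can you set $m'=QQ'/r+c\psi$ with $\psi\geq 0$ a bump and $c>0$ fixed by $m(r_1)=m$. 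Your alternative recipe of taking the energy density to be ``$j$ times a large constant plus a bump'' does not work as stated, since $j=Q'/r^2$ changes sign and the relevant lower bound on $m'$ is $Qjr$, not a constant multiple of $j$; the $m'=QQ'/r+c\psi$ prescription is the correct repair. With that fixed, the remaining verifications (positivity of $1-2\hat m/r$, positive mean curvature of the coordinate spheres so the horizon stays outermost, exact agreement with \RN\ outside the annulus so that $m$, $q$, $\rho$ are unchanged) go through exactly as you describe, and match the paper's treatment of the outermost condition.
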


We also examine the question of whether the conformal flow technique developed
in~\cites{khuriweinsteinyamadaCRPI,khuriweinsteinyamada2013} can be extended to
other matter models. It is expected that for a Riemannian Penrose inequality to hold, the
matter model should admit a unique stable static black hole solution. At the moment, we
are only aware of two such models, the Einstein-Abelian-Yang-Mills (EAYM), and the
Einstein-Maxwell-Dilaton (EMD). We prove a
Riemannian Penrose inequality for EAYM black holes, and conjecture that a
Riemannian Penrose inequality for EMD black holes also holds.

\begin{theorem} \label{abelian-YM}
Let $(M,g,E_1,B_1,\dots,E_\ell,B_\ell)$ be a strongly asymptotically flat
initial data set for the EAYM equations with an
outermost minimal surface boundary of area $A=4\pi\rho^2$, with ADM mass $m$, and total
charge $q$, satisfying the EAYM dominant energy
condition~\eqref{abelian-dec}. If $\div E_i$ and
$\div B_i$ are compactly
supported, then~\eqref{upper} holds. If there are no charges outside the horizon then
equality holds if and only if the data set
arises as the canonical slice of an EAYM Reissner-Nordstr\"{o}m spacetime.
\end{theorem}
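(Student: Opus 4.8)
The plan is to reduce Theorem~\ref{abelian-YM} to the already-established charged Penrose inequality, Theorem~\ref{CRPI} (and its refinement, Theorem~\ref{thm-IMCF}), together with the Maxwell-case argument behind Theorem~\ref{CRPI-charges}. The basic observation is that an EAYM data set $(M,g,E_1,B_1,\dots,E_\ell,B_\ell)$ with an \emph{abelian} structure group behaves, at the level of the constraint equations and the energy-momentum tensor, like $\ell$ decoupled Maxwell fields. First I would write out the EAYM dominant energy condition~\eqref{abelian-dec} and the energy density of the matter, verifying that the electromagnetic contribution is a sum $\sum_{i=1}^\ell \mu_i$ of individual Maxwell energy densities, so that the scalar curvature lower bound used in the conformal flow has exactly the same structural form as in the single-Maxwell case, but with $|E|^2+|B|^2$ replaced by $\sum_i(|E_i|^2+|B_i|^2)$. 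The total charge $q$ appearing in~\eqref{upper} must be the ``Pythagorean'' combination $q^2 = \sum_i q_i^2$, where $q_i$ is the charge of the $i$-th field computed as a flux integral at infinity; I would record this definition explicitly.

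Next I would treat the inequality itself in two regimes, mirroring the discussion after Theorem~\ref{CRPI}. When $|q|\geq\rho$, I invoke the positive mass theorem with charge for multiple abelian fields---this follows from the usual Witten-type spinor argument since the charge term in the Weitzenböck formula again splits as a sum over $i$---to get $m\geq|q|$, which immediately yields~\eqref{upper}. When $|q|\leq\rho$, I run the conformal flow of~\cite{khuriweinsteinyamadaCRPI}: the flow is driven entirely by the geometry $(M,g)$ and the function $q(r)$ built from the enclosed charges of all the fields, and the monotonicity of the Hawking-type mass along the flow uses only (i) the charged dominant energy condition, which we have in the form~\eqref{abelian-dec}, and (ii) the structure of the charge function. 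Since these inputs are formally identical to the Maxwell case once $q^2=\sum q_i^2$ is used, the monotonicity and the limiting comparison with Reissner-Nordström go through verbatim, giving $m\geq\frac12(\rho+q^2/\rho)$ and hence~\eqref{upper}.

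For the hypothesis that $\div E_i$ and $\div B_i$ are merely compactly supported rather than zero, I would import the device used to prove Theorem~\ref{CRPI-charges}: one observes that outside the (compact) support of the charge densities the enclosed charge of each field is constant, equal to $q_i$, so the charge function $q(r)$ is eventually constant and the conformal flow ``sees'' the exterior region exactly as it would in the source-free case; in the interior, the presence of charged matter only \emph{helps}, because it contributes a nonnegative term to the energy density via the charged dominant energy condition. Concretely, I would show that replacing the true data by source-free data in the exterior (cutting off at a sphere enclosing all the charge) does not increase the relevant Hawking mass, and then apply the source-free monotonicity. For the rigidity clause, assuming $\div E_i=\div B_i=0$, equality in~\eqref{upper} forces equality throughout the conformal-flow monotonicity, which by the same argument as in Theorem~\ref{thm-IMCF} pins down $(M,g)$ to be a Reissner-Nordström canonical slice; the only additional point is that equality in~\eqref{abelian-dec} together with the static equations forces the individual fields $(E_i,B_i)$ to be mutually parallel constant multiples of a single Coulomb field, so that the data is the canonical slice of an EAYM Reissner-Nordström spacetime (one whose Maxwell field is a fixed direction in the Lie algebra).

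The main obstacle I anticipate is not the inequality---which is essentially a bookkeeping exercise over the index $i$ once one trusts that the abelian Yang-Mills stress tensor decouples---but the rigidity statement: one must rule out ``exotic'' equality configurations in which the several $U(1)$ factors are excited in a genuinely $\ell$-dimensional way rather than along a single direction, and check that the static EAYM field equations with a degenerate (null-like) stress tensor still force the Majumdar-Papapetrou/Reissner-Nordström rigidity. This requires showing that the matching of the ADM mass, the total charge $q=\sqrt{\sum q_i^2}$, and the horizon area leaves no room for the relative phases/magnitudes of the $q_i$ to vary, i.e.\ that the extremal geometry determines $\sum|E_i|^2$ pointwise, and then that the divergence-free, curl-free conditions plus the boundary behavior at the horizon and at infinity make each $E_i$ a scalar multiple of a single harmonic vector field. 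I would handle this by diagonalizing: choose an orthonormal basis of $\mathbb{R}^\ell$ adapted to the asymptotic charge vector $(q_1,\dots,q_\ell)$, argue that the components orthogonal to it carry zero charge and zero energy in the equality case and hence vanish, reducing to the single-field rigidity already proven in Theorem~\ref{thm-IMCF}.
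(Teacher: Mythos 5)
Your proposal is correct and follows essentially the same route as the paper: reduce to the single-Maxwell machinery of \cite{khuriweinsteinyamadaCRPI} (positive mass theorem with charge when $|q|\geq\rho$, the charged conformal flow and inverse mean curvature flow when $|q|\leq\rho$, with the compactly supported charge densities handled exactly as in Theorem~\ref{CRPI-charges}), and use an orthogonal rotation of the $U(1)^\ell$ fields to settle rigidity. The only organizational difference is that the paper applies this rotation at the very outset---so that all charge is carried by a single field $\tilde E_1$ and the remaining rotated fields merely contribute nonnegatively to the energy density---which makes both the inequality and the rigidity statement immediate corollaries of the Maxwell case and spares you the direct bookkeeping of the summed monotonicity computation and the ``exotic equality configuration'' worry you raise at the end; also note that for the compact-support issue the paper does not modify the data in the exterior but simply waits until the flowing surfaces enclose the support of the charges, which is the cleaner of the two devices you sketch.
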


\begin{conjecture} \label{dilaton}
Let $(M,g,E,B,\phi)$ be a strongly asymptotically flat initial data
set for the EMD equations with an outermost minimal surface boundary of area
$A=4\pi \rho^2$, ADM mass $m$, and total charge $q$, satisfying the EMD dominant
energy
condition~\eqref{dilaton-dec}. If the charge densities $\div(e^{-2\phi} E)$ and
$\div B$ are compactly supported, then
\begin{equation} \label{dRPI}
  m \geq \frac12\, \sqrt{\rho^2+2q^2}.
\end{equation}
If $e^{-2\phi} E$ and $B$ are divergence free, then equality
holds if and only if the data set arises as the canonical slice of the Gibbons EMD
black hole.
\end{conjecture}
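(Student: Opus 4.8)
The plan is to follow the same two-part strategy used to establish Theorems~\ref{CRPI} and~\ref{thm-IMCF}. First one should isolate the regime in which the appropriate positive mass theorem with dilatonic charge already implies~\eqref{dRPI}; this plays the role of the range $|q|\ge\rho$ in the pure Maxwell setting and reduces the problem to a complementary ``small charge'' range. In that range the plan is to run a conformal flow of metrics $g_t=u_t^4 g$ modeled on Bray's proof~\cite{bray2001} and its charged adaptation in~\cite{khuriweinsteinyamadaCRPI}: $u_t$ solves the relevant linear elliptic equation, the evolving metric develops an outermost minimal surface $\Sigma_t$, and the flow exhausts the region of $M$ exterior to $\Sigma_t$. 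The EMD Hamiltonian constraint together with the dominant energy condition~\eqref{dilaton-dec} yields a pointwise lower bound for the scalar curvature of $g$ of the schematic form $R_g \ge 2\bigl(e^{-2\phi}|\E|^2+|\B|^2\bigr)+2|\nabla\phi|^2+\cdots$, and the point is that this structure should be (approximately) preserved along the flow once $\phi$ and the fields are carried along appropriately.

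The heart of the argument is to produce a \emph{dilatonic Hawking mass} $\md(\Sigma)$ --- built from $|\Sigma|$, $\int_\Sigma H^2$, the enclosed charge $q(\Sigma)=\tfrac1{4\pi}\int_\Sigma e^{-2\phi}\E\cdot\nu\,dA$, and a suitable dilaton contribution --- with three properties: (i) on the round spheres of a canonical slice of the Gibbons EMD black hole it reduces to $\tfrac12\sqrt{\rho(\Sigma)^2+2q(\Sigma)^2}$; (ii) $\md(\partial M)\ge\tfrac12\sqrt{\rho^2+2q^2}$ by the minimal surface boundary condition; and (iii) $t\mapsto\md(\Sigma_t)$ is monotone nondecreasing along the conformal flow, with $\lim_{t\to\infty}\md(\Sigma_t)\le m$. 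The hypothesis that $\div(e^{-2\phi}\E)$ and $\div\B$ are compactly supported is essential here: as in Theorem~\ref{CRPI-charges} it guarantees that $q(\Sigma_t)$ is eventually constant and equal to $q$, and that there is no uncontrolled flux of charge across the flowing surfaces. Combining (i)--(iii) gives~\eqref{dRPI}.

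For rigidity one assumes equality in~\eqref{dRPI} together with the absence of charged matter, so $e^{-2\phi}\E$ and $\B$ are divergence free. Then $\md(\Sigma_t)$ is constant in $t$, forcing every inequality in the monotonicity computation to be saturated; tracing these back should show that each $\Sigma_t$ is totally umbilic with constant mean curvature, that $\nabla\phi$, $\E$, $\B$ are all aligned with the normal, and that the induced geometry is that of a Gibbons slice. One then invokes (or proves, in the static setting) a uniqueness theorem for static EMD black holes to conclude that the data set is exactly the Gibbons canonical slice.

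The main obstacle is step (iii), the construction of a monotone $\md$. In the pure Maxwell case the Maxwell field and the enclosed charge interact cleanly with the conformal rescaling, but here the physically divergence-free field carries the weight $e^{-2\phi}$ while the dilaton contributes a separate scalar-field term $|\nabla\phi|^2$ to the energy density; under $g_t=u_t^4 g$ these two pieces transform differently, and it is not clear a priori that there is a choice of dilaton term in $\md$ making the flow derivative have a definite sign. Secondary difficulties are establishing the sharp EMD positive mass theorem with dilatonic charge needed to open the dichotomy, and supplying the static uniqueness input for rigidity. These gaps are precisely why the statement is offered as a conjecture rather than a theorem; an alternative attack via inverse mean curvature flow and Geroch monotonicity, as in~\cite{khuriweinsteinyamada2013}, runs into the same weighting issue.
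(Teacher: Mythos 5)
This statement is a conjecture; the paper does not prove it, and its ``Discussion on Conjecture~\ref{dilaton}'' is itself only an outline. Your proposal follows that outline in spirit --- a Penrose-type heuristic, a dilatonic quasi-local mass $\md$ in the style of \cite{disconzikhuri}, a conformal flow modeled on \cites{bray2001,khuriweinsteinyamadaCRPI}, static uniqueness \cites{masood1993,marssimon2002} for rigidity, and the correct identification of the central obstruction, namely that $e^{-2\phi}E$ and the $|\nabla\phi|^2$ term transform incompatibly under $g_t=u_t^4g$. Two structural points diverge from the paper, however. First, your opening dichotomy via an EMD positive mass theorem is superfluous: the paper observes that, unlike $\frac12\bigl(\rho+q^2/\rho\bigr)$ in \eqref{mCRPI}, the right-hand side of \eqref{dRPI} is monotone in $\rho$, so no auxiliary area--charge inequality or ``large charge'' regime is needed, and Theorem~\ref{dilaton-PMT} likely plays no role in the proof. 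Indeed the paper points out that the area--charge mechanism of \eqref{stability} genuinely fails in the EMD setting (it would require $e^{-\phi}E$ rather than $e^{-2\phi}E$ to be divergence free), so there are connected EMD horizons with arbitrarily small $\rho/|q|$; a reduction resting on that dichotomy would be built on a false analogy.

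Second, you attach the monotone quantity to the wrong flow. In the Bray-type conformal flow the monotone quantity is the ADM mass $m(t)$ (non-increasing, with the boundary area held constant and the flow converging to the model solution); a quasi-local mass $\md(\Sigma_t)$ increasing from the horizon out to infinity is the Geroch-type mechanism of the \emph{inverse mean curvature flow} of \cites{huiskenilmanen2001,jang1979}, which the paper proposes separately and only for a connected horizon with no exterior charges. Your properties (i)--(iii) as stated describe the IMCF scheme but are labeled as properties of the conformal flow; chaining $\md(\partial M)\ge\frac12\sqrt{\rho^2+2q^2}$ to $\lim_t\md(\Sigma_t)\le m$ does not make sense along Bray's flow, where the surfaces $\Sigma_t$ are minimal boundaries of conformally deformed metrics rather than an exhausting foliation of the fixed manifold. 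Neither your outline nor the paper's supplies the missing monotonicity computation, which is precisely why the statement remains a conjecture.
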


The plan of the paper is as follows. In the next section, we briefly present some of the
background, heuristics, and prior work on the Riemannian Penrose inequality without and
with charge. In that section
we also discuss the dominant energy conditions for EAYM data
sets and for EMD data sets, as well as the EMD black hole. In
Section~\ref{charges}, we prove Theorem~\ref{CRPI-charges},
Theorem~\ref{counterexample} and Proposition~\ref{norigidity}. In
Section~\ref{matter}, we prove Theorem~\ref{abelian-YM} and discuss
Conjecture~\ref{dilaton}.

\section{Background} \label{background}

\subsection{The Penrose inequality}
The Penrose inequality was originally proposed by Penrose as a test of the cosmic
censorship conjecture, or more generally a test of the standard picture of gravitational
collapse~\cites{penrose1973,penrose1982}. Consider a strongly asymptotically flat
Cauchy surface in a spacetime satisfying
the dominant energy condition, with ADM mass $m_0$, and containing an
event horizon of area $A_0=4\pi \rho_0^2$, which undergoes
gravitational collapse and settles to a Kerr-Newman black hole. Since the ADM mass
$m$ of the final state is no greater
than $m_0$, and since the end state area radius $\rho$ is no less than $\rho_0$ by
Hawking's
area theorem~\cite{hawkingellis}, and since for the final state
$m\geq \rho/2$ in order to avoid naked singularities,
it must have been the case that $m_0\geq \rho_0/2$ also at the
beginning of the evolution. The event horizon is indiscernible in the original slice
without knowing the full evolution.
However, one may replace the event horizon by the outermost minimal area enclosure of the
apparent horizon, the boundary
of the region admitting trapped surfaces, and reach the same conclusion. A
counterexample to the Penrose inequality would therefore have suggested Cauchy data which
leads under the Einstein evolution to naked singularities, while a proof of
the inequality could be viewed as evidence in support of cosmic censorship.

\subsection{Penrose's heuristic argument for multiple black holes} \label{penrose}
It is usually assumed that the end state of gravitational
collapse is a single Kerr-Newman black hole.
However, a more appropriate assumption for the end state is a
finite number of mutually distant Kerr-Newman black holes moving apart with
asymptotically
constant velocity. This should be the result, if for instance, two distant black
holes were initially moving away from each other sufficiently fast. We will
now describe the heuristic Penrose argument for the charged Penrose
inequality~\eqref{upper} in this setting. It appears that this has not been
previously considered in the literature.

Let $m_{i}$, $A_{i}=4\pi\rho_i^2$,
$q_{i}$, $J_i$ denote the ADM masses, horizon areas, total charges, and angular momenta of the end state black
holes. Then the total (ADM) mass, horizon area radius, and charge of the end state is
$m=\sum m_i$, $\rho=(\sum\rho_i^2)^{1/2}$, $q=\sum q_i$.
The area radius of the Kerr-Newman black hole~\cite{dainkhuriweinsteinyamada} is given by
\[
 \frac{\rho_i^2}{2} = m_i^2-\frac{q_i^2}{2} + \sqrt{\left(m_i^2-\frac{q_i^2}{2}\right)^2-\frac{q_i^4}{4}-J_i^2}
 \leq m_i^2-\frac{q_i^2}{2} + \sqrt{\left(m_i^2-\frac{q_i^2}{2}\right)^2-\frac{q_i^4}{4}}.
\]
It follows that
\[
  \rho_i \leq m_i+\sqrt{m_i^2-q_i^2}.
\]
Let $m_{0}$, $\rho_{0}$, $q_{0}$ denote the ADM mass, horizon area radius, and total
charge of an initial state. Under the
assumption that no charged matter is present, the total charge is conserved
$q_0=q=\sum q_i$. Moreover, by the Hawking area theorem
$\rho_0\leq\rho\leq\sum \rho_i$, and since gravitational waves may only carry away
positive energy $m_0 \geq m = \sum m_i$.

\begin{lemma}
Let $a_i$, $b_i$ be real numbers, then
\begin{equation} \label{hardy}
  \left(\sum a_i \right)^2 + \left(\sum b_i \right)^2 \leq \left( \sum\sqrt{a_i^2 +
  b_i^2} \right)^2
\end{equation}
\end{lemma}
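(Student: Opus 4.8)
The plan is to recognize \eqref{hardy} as nothing more than the triangle inequality in the Euclidean plane. To each index $i$ I would associate the vector $v_i = (a_i, b_i) \in \R^2$, so that $\sqrt{a_i^2 + b_i^2} = |v_i|$ is its Euclidean length, while the left-hand side of \eqref{hardy} is $|\sum_i v_i|^2$ and the right-hand side is $\bigl(\sum_i |v_i|\bigr)^2$. Since the triangle inequality gives $\bigl|\sum_i v_i\bigr| \leq \sum_i |v_i|$ and both quantities are nonnegative, squaring preserves the inequality and yields \eqref{hardy} at once.

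If one prefers an argument that does not invoke the triangle inequality as a black box, I would instead expand both sides directly. The left-hand side is $\sum_{i,j}(a_i a_j + b_i b_j)$ and the right-hand side is $\sum_{i,j}\sqrt{(a_i^2+b_i^2)(a_j^2+b_j^2)}$, so it suffices to check the termwise bound $a_i a_j + b_i b_j \leq \sqrt{(a_i^2+b_i^2)(a_j^2+b_j^2)}$. This last inequality is the two-dimensional Cauchy--Schwarz inequality applied to $(a_i,b_i)$ and $(a_j,b_j)$ (and is trivially true when the left side is negative); summing over all $i,j$ completes the proof. A third option, should an entirely elementary write-up be desired, is induction on the number of terms, where the two-term case reduces to squaring $\sqrt{(a_1+a_2)^2+(b_1+b_2)^2} \leq \sqrt{a_1^2+b_1^2}+\sqrt{a_2^2+b_2^2}$ and clearing the cross term via Cauchy--Schwarz.

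There is essentially no serious obstacle here: the only point requiring a word of care is that squaring is monotone only because both sides of the vector triangle inequality are nonnegative, and in the termwise approach one must dispatch the case where $a_i a_j + b_i b_j < 0$ separately (where the bound is immediate since the right side is nonnegative). I would present the first, one-line vectorial argument in the paper, as it is the cleanest and makes transparent why equality in \eqref{hardy} holds exactly when the vectors $(a_i,b_i)$ are all nonnegative multiples of a common vector — a remark that is relevant for tracking the rigidity case in the Penrose heuristic of Section~\ref{penrose}.
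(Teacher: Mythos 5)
Your primary argument — viewing each pair $(a_i,b_i)$ as a vector in $\mathbb{R}^2$ and squaring the triangle inequality $\left|\sum_i v_i\right| \leq \sum_i |v_i|$ — is exactly the paper's proof, which simply cites the triangle inequality in $\mathbb{R}^2$. The proposal is correct, and the additional alternatives (termwise Cauchy--Schwarz, induction) and the equality-case remark are fine but not needed.
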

\begin{proof}
This follows directly from the triangle inequality for $n$ points in $\mathbb{R}^{2}$.
\end{proof}
Now let $a_i=\sqrt{m_i^2-q_i^2}$ and $b_i=q_i$, then we have
\[
  \left(\sum\sqrt{m_i^2-q_i^2}\right)^2 + \left(\sum q_i\right)^2 \leq
  \left(\sum m_i\right)^2,
\]
or equivalently
\[
  \sum\sqrt{m_i^2-q_i^2} \leq \sqrt{\left(\sum m_i\right)^2 - \left(\sum q_i\right)^2}  =
  \sqrt{m^2-q^2}.
\]
Thus we conclude:
\[
  \rho_0\leq\rho \leq \sum \rho_i \leq \sum m_i + \sum\sqrt{m_i^2-q_i^2} \leq
  m + \sqrt{m^2-q^2} \leq m_0 + \sqrt{m_0^2-q_0^2}.
\]

\subsection{The Riemannian Penrose inequality} \label{section-RPI}
The Penrose inequality further simplifies in the time-symmetric case, where the apparent
horizon
coincides with the outermost minimal surface. Moreover, the dominant energy condition
reduces
now to non-negative scalar curvature of the Cauchy hypersurface, leading to the
Riemannian version of the inequality.

\begin{theorem} \label{RPI}
Let $(M,g)$ be a three dimensional strongly asymptotically flat Riemannian manifold of
non-negative
scalar curvature, with an outermost minimal surface boundary $\Sigma$ of area $A=4\pi
\rho^2$, then
\[
  m\geq \frac12 \rho,
\]
with equality if and only if the manifold is a canonical slice of the Schwarzschild
spacetime.
\end{theorem}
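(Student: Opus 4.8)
The plan is to prove Theorem~\ref{RPI} via Geroch monotonicity of the Hawking mass along inverse mean curvature flow, in the weak formulation of Huisken and Ilmanen. For a closed surface $\Sigma\subset M$ set
\[
 m_H(\Sigma)=\sqrt{\frac{|\Sigma|}{16\pi}}\left(1-\frac{1}{16\pi}\int_\Sigma H^2\right),
\]
where $H$ is the mean curvature with respect to the outward normal and $|\Sigma|$ the area. Since the boundary $\Sigma$ is a minimal surface, $H\equiv 0$ there, so $m_H(\Sigma)=\sqrt{A/16\pi}=\tfrac12\rho$. The idea is to deform $\Sigma$ out to spatial infinity through surfaces $\Sigma_t$ along which $m_H$ is nondecreasing and which become large round spheres, where $m_H(\Sigma_t)\to m$; combining these gives $\tfrac12\rho=m_H(\Sigma)\le m$.

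First I would introduce the inverse mean curvature flow $\partial_t x=\nu/H$, with $\nu$ the outward unit normal, and carry out Geroch's computation: differentiating $m_H(\Sigma_t)$, integrating by parts, and substituting the Gauss equation together with the Gauss--Bonnet theorem, one finds that $\tfrac{d}{dt}m_H(\Sigma_t)$ is bounded below by a positive constant times
\[
 16\pi-8\pi\,\chi(\Sigma_t)+\int_{\Sigma_t}\left(2\,|\nabla_{\Sigma_t}\log H|^2+\tfrac12(\lambda_1-\lambda_2)^2+R\right)dA,
\]
where $\lambda_1,\lambda_2$ are the principal curvatures and $R$ is the scalar curvature of $M$. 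Since $R\ge 0$ by hypothesis and the flow surfaces stay connected (so $\chi(\Sigma_t)\le 2$), the right-hand side is nonnegative, hence $m_H(\Sigma_t)$ is nondecreasing. I would then invoke the asymptotic analysis of the flow in a strongly asymptotically flat end, which shows that the $\Sigma_t$ round up and $m_H(\Sigma_t)\to m$ as $t\to\infty$, giving $\tfrac12\rho\le m$.

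The main obstacle is that the classical flow generically becomes singular in finite time --- the mean curvature may vanish, or the surface may collide with itself --- so the argument must be carried out in the weak level-set formulation, in which the flowing surfaces are the level sets $\Sigma_t=\{u=t\}$ of a proper locally Lipschitz function $u\ge 0$ on $M$ with $\{u=0\}=\Sigma$ solving $\div(\nabla u/|\nabla u|)=|\nabla u|$ in a variational sense. Establishing existence and the requisite regularity of $u$, analysing the jump behaviour whereby the evolving surface instantaneously replaces itself by its outermost minimal-area enclosure, and verifying that Geroch monotonicity persists across these jumps --- the area being continuous there while $\int H^2$ can only decrease --- is the technical heart of the argument and the step I expect to be hardest. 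A further subtlety: running the weak flow from a single connected component of $\Sigma$ yields only $m\ge\tfrac12\sqrt{A_{\max}/4\pi}$, where $A_{\max}$ is the area of the largest component; to reach the sharp inequality for a possibly disconnected outermost minimal surface with $\rho=(\sum_i\rho_i^2)^{1/2}$ one uses instead Bray's conformal flow of metrics~\cite{bray2001}, which is the technique adapted in the present paper. Finally, for rigidity one reverses the monotonicity argument: equality forces every term above to vanish identically along the flow, so $R\equiv 0$, the $\Sigma_t$ are totally umbilic with $\nabla H\equiv 0$, hence round constant-mean-curvature spheres foliating the exterior of $\Sigma$; unwinding these conditions identifies $(M,g)$ with the canonical slice of Schwarzschild.
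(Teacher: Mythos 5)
The paper offers no proof of Theorem~\ref{RPI}: it is stated as background and attributed to the inverse mean curvature flow argument of Huisken--Ilmanen~\cite{huiskenilmanen2001} for connected boundaries and to Bray's conformal flow~\cite{bray2001} for a possibly disconnected $\Sigma$. Your sketch reproduces exactly this route --- Geroch monotonicity of the Hawking mass along the weak level-set flow, with the correct caveat that the sharp inequality for a disconnected outermost horizon requires Bray's method --- so it is consistent with the approach the paper cites, and its accurately flagged technical gaps (existence, regularity, and monotonicity across jumps) are precisely what those references supply.
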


The first published proof of this theorem was given by Huisken-Ilmanen
in~\cite{huiskenilmanen2001}
using inverse mean curvature flow, but required the assumption that $\Sigma$ be
connected. Another proof by Bray in~\cite{bray2001} used a conformal flow and
applied more generally to a non-connected boundary $\Sigma$. Nonetheless, one advantage
of the first proof was that it could be immediately generalized to obtain the
Einstein-Maxwell case~\cite{jang1979},
i.e.\ to prove Theorem~\ref{thm-IMCF} in the case of a connected $\Sigma$.
It is important to point out that~\eqref{mCRPI} implies both~\eqref{upper} and
a lower bound:
\begin{equation} \label{lower}
  \rho \geq m - \sqrt{m^2-q^2}.
\end{equation}
In fact this lower bound follows, when $\Sigma$ is connected, from the stability of
the outermost horizon~\cites{khuriweinsteinyamadaPRD,dainkhuriweinsteinyamada}. Indeed,
the absence of charges outside the horizon, the Cauchy-Schwartz inequality,
the charged dominant energy condition, the stability of the outermost horizon, and
the Gauss-Bonnet Theorem, imply
\begin{equation} \label{stability}
\begin{gathered}
  \frac{ 4\pi q^2}{\rho^2}
  = \frac1{4\pi\rho^2} \left[ \left(\int_{\Sigma} g(E,n) dA\right)^2
  + \left(\int_{\Sigma} g(B,n) dA\right)^2 \right] \\
  \leq \int_{\Sigma} \left( |E|^2 + |B|^2 \right) dA
  \leq \int_\Sigma \frac{1}{2}R\,dA \leq \int_\Sigma K = 4\pi N,
\end{gathered}
\end{equation}
where $R$ is the scalar curvature, and $N$, $K$, and $n$ are respectively the number of
components, the Gauss curvature, and the unit normal of $\Sigma$. In
particular one obtains the area-charge inequality $|q|\leq \rho$, or more generally
\begin{equation} \label{n-bh}
  |q| \leq \sqrt N \rho.
\end{equation}
This inequality was obtained in~\cite{dainjaramilloreiris2012}. If $N=1$, the lower
bound~\eqref{lower} follows immediately:
\[
   m = \sqrt{q^2+m^2-q^2} \leq |q| + \sqrt{m^2-q^2} \leq \rho + \sqrt{m^2-q^2}.
\]

\begin{figure}
\includegraphics[width=11cm]{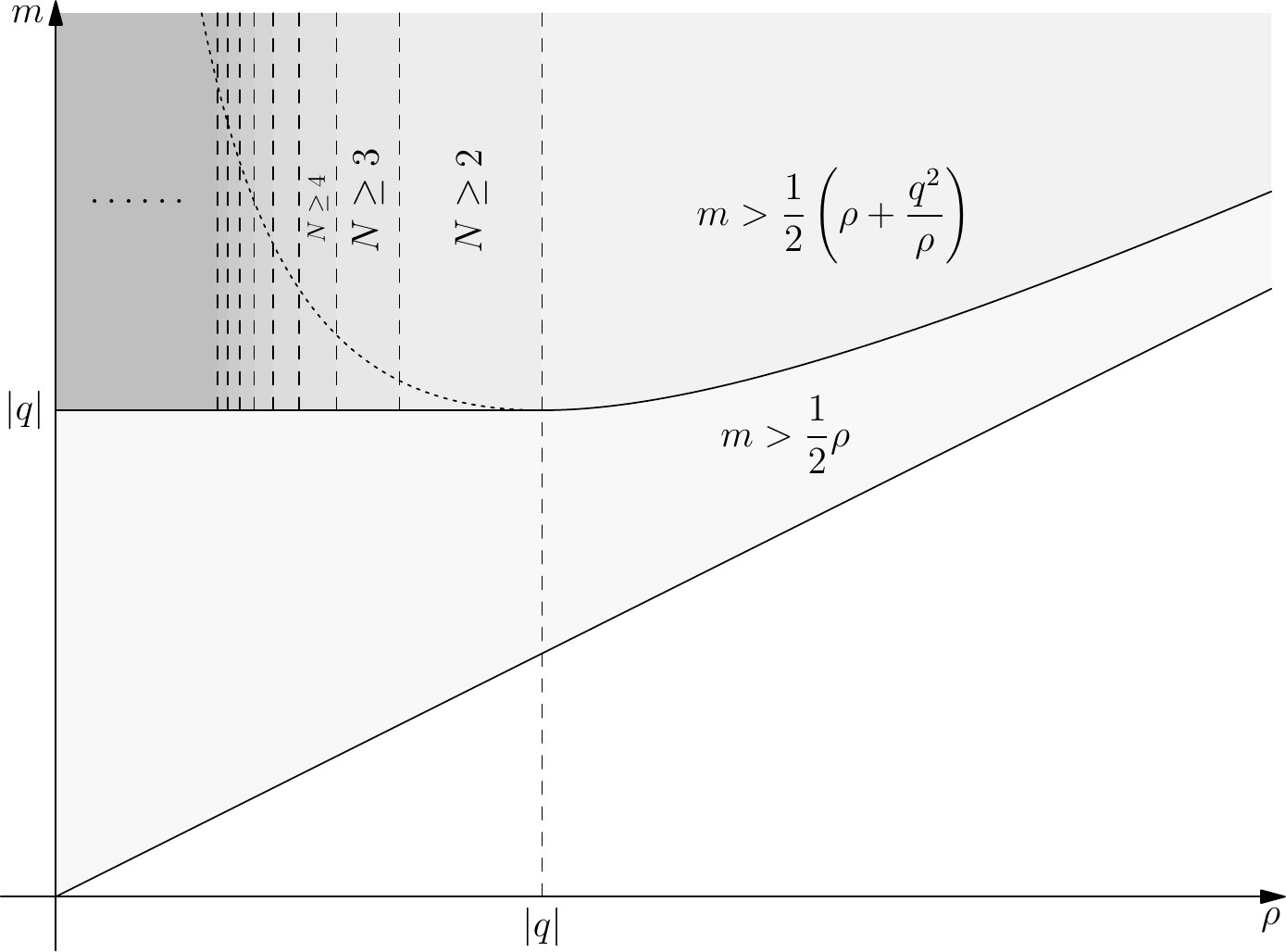}
 \caption{Graphical representation of geometric inequalities}  \label{graphical}
\end{figure}

We note that the inequality~\eqref{n-bh} implies that if $|q|>\sqrt N \rho$, then there
must be at least $N$ black holes, demonstrating that inequalities between these geometric
quantities have topological consequences. All these results are summarized in
Figure~\ref{graphical}.

\subsection{Data for the Einstein-Abelian-Yang-Mills equations}
The gauge group for the EAYM matter model is
$\bigotimes_{\ell} U(1)$. The connection, and the field strength $F$, have values in the
abelian Lie algebra $\R^{\ell}$. As usual, the electric and magnetic fields are given by
$\E=(E_1,\dots,E_\ell)=i_{\t}F$, $\B=(B_1,\dots,B_\ell)=i_{\t}{\ast F}$, where $\t$ is
the future directed timelike unit normal to the initial data slice, and $\ast$ is the
Hodge dual. The EAYM equations of motions are
\begin{gather*}
  G_{\mu\nu} = \Rb_{\mu\nu} - \frac12 \Rb\g_{\mu\nu} = 8\pi T_{\mu\nu}, \\
  T_{\mu\nu} = \frac{1}{4\pi} \left( F_{\mu\alpha}F_\nu{}^{\alpha} - \frac14
  F_{\alpha\beta} F^{\alpha\beta}   \g_{\mu\nu}\right)
  = \frac1{8\pi}   \left(F_{\mu\alpha}F_\nu{}^{\alpha} +
  {\ast F}_{\mu\alpha} {\ast F}_{\nu}{}^\alpha\right), \\
  \Db^\mu F_{\mu\nu}=0, \qquad  \Db^\mu {\ast F}_{\mu\nu} = 0,
\end{gather*}
where barred quantities refer to spacetime objects, and Greek indices run $0,\dots,3$.
From the expression for the energy-momentum-stress tensor, we obtain the energy
density of the slice after contributions from the Yang-Mills
fields have been removed:
\[
  8\pi\muym = \Rb_{\t\t} + \frac12 \Rb - 8\pi T_{\t\t} =
  \frac12 \left( R - |k|^2 + (\tr k)^2 \right) - \left( \sum_{i=1}^\ell |E_i|^2 +
  |B_i|^2\right),
\]
where $k$ is the second fundamental form of the slice. In the time-symmetric case, $k=0$,
and the dominant energy condition is
\begin{equation} \label{abelian-dec}
 R\geq 2 \left( \sum_{i=1}^\ell |E_i|^2 + |B_i|^2\right).
\end{equation}
We note that the dominant energy condition is usually stronger, namely $\muym \geq
|J|$ where $J_i=\frac1{8\pi} G_{\t i} -
T_{\t i}$, while here we only use the weaker condition $\muym\geq0$. The total charges
are defined by
\begin{gather*}
  q_{E_i} = \frac1{4\pi} \int_{S_\infty} g(E_i,n)\, dA, \quad
  q_{B_i} = \frac1{4\pi} \int_{S_\infty} g(B_i,n)\, dA, \\
  q = \sqrt{q_{E_1}^2 + q_{B_1}^2 + \dots + q_{E_\ell}^2 + q_{B_\ell}^2}.
\end{gather*}

\subsection{Data for the Einstein-Maxwell-Dilaton equations} \label{dPI}
The EMD action has the Lagrangian
\[
  \cal L = \left(\Rb - 2\Db_\alpha\phi \Db^\alpha \phi - e^{-2\phi}
  F_{\alpha\beta} F^{\alpha\beta} \right) \sqrt{-\det \g}.
\]
There are other EMD models where the coupling $e^{-2\phi}$ is replaced by $e^{-2a\phi}$, in which $a$ is a coupling constant. While it is possible to conjecture a Riemannian Penrose inequality also for $a\ne1$, we leave
this for future work. From the Lagrangian, one obtains the EMD equations of motion
\begin{gather*}
  G_{\mu\nu} = 8\pi T_{\mu\nu}, \\
  T_{\mu\nu} = \frac{e^{-2\phi}}{8\pi}   \left(F_{\mu\alpha}F_\nu{}^{\alpha} +
  {\ast F}_{\mu\alpha} {\ast F}_{\nu}{}^\alpha\right)
  - \frac1{4\pi} \left( \Db_\mu\phi \Db_\nu\phi - \frac12
  \Db_\alpha\phi \Db^\alpha\phi\, \g_{\mu\nu} \right), \\
  \Db^\mu \left( e^{-2\phi} F_{\mu\nu} \right)=0, \qquad
  \Db^\mu {\ast F}_{\mu\nu} = 0, \qquad
  \Box_{\g} \phi + \frac12 e^{-2\phi} F_{\mu\nu} F^{\mu\nu} = 0.
\end{gather*}
As above
\[
 8\pi\mu_{\text{\tiny MD}} = \frac12 \left( R - |k|^2 + (\tr k)^2 \right) - e^{-2\phi}
\left( |E|^2
 + |B|^2\right) - (\partial_t\phi)^2 - |\nabla\phi|^2.
\]
In the time-symmetric case, $k=0$ and $\partial_t\phi=0$, thus from $\mu_{\text{\tiny
MD}}\geq0$ we
get the EMD dominant energy condition
\begin{equation} \label{dilaton-dec}
 R \geq 2 e^{-2\phi} \left( |E|^2  + |B|^2\right) + 2|\nabla\phi|^2.
\end{equation}
The charges contained in a surface $S$ are given by
\begin{gather*}
  q_E(S) = \frac1{4\pi} \int_{S} e^{-2\phi} g(E,n)\, dA, \quad
  q_B(S) = \frac1{4\pi} \int_{S} g(B,n)\, dA, \\
  q(S) = \sqrt{q_E(S)^2 + q_B(S)^2}.
\end{gather*}
We note that if $\div(e^{-2\phi} E)=\div B=0$ these charges depend only on the
homology class of $S$. The total charges are obtained by taking $S=S_\infty$.

A static spherically symmetric EMD black hole was discovered by
Gibbons~\cites{gibbons1982,gibbonsmaeda}:
\begin{equation} \label{DBH}
\begin{gathered}
  ds^2 = -\left(1-\frac{2m}{r}\right) dt^2 + \left(1-\frac{2m}{r}\right)^{-1} dr^2 +
  r\left(r-\frac{q^2}{m}\right) d\omega^2, \\
  F_{rt} = \frac{q}{r^2}, \qquad e^{2\phi} = 1 - \frac{q^2}{mr},
\end{gathered}
\end{equation}
where $d\omega^2$ is the round metric on the unit 2-sphere. It was proven to be the
unique static solution in~\cites{masood1993,marssimon2002}.
We note that the scalar curvature of this EMD black hole is given by
\[
  R = \frac1{r(r-q^2/m)} \left( \frac{2q^2}{r^2}
      + \left(1-\frac{2m}{r}\right) \frac{q^4}{2m^2r(r-q^2/m)} \right),
\]
and hence~\eqref{dilaton-dec} is an equality in this case. For more details on this
solution, see~\cite{horowitz}.

\section{Charges outside the horizon} \label{charges}

\begin{proof}[Proof of Theorem~\ref{CRPI-charges}]
The proof follows directly
from~\cites{khuriweinsteinyamada2013,khuriweinsteinyamadaCRPI}, after
noting that the only place which requires $E$ and $B$ to be
divergence free is when one applies the inverse mean curvature flow at the end of the
proof. According to
the exhaustion result the flowing surfaces $\Sigma_{t}$ eventually become connected
and enclose any large coordinate sphere. Thus, if $E$ and $B$ are divergence free outside
of a large coordinate sphere $S_{r}$, then we may apply the inverse mean curvature flow
argument once $\Sigma_{t}$ encloses $S_{r}$.
\end{proof}

\begin{proof}[Proof of Theorem~\ref{counterexample}]
Let the spherically symmetric metric be given by
\[
  ds^2 = \left(1-\frac{2m(r)}{r}\right)^{-1} dr^2 + r^2 \, d\omega^2.
\]
It is easy to check that $m(r)$ is
the Hawking mass of the coordinate sphere $S_r$. Furthermore, a
horizon occurs at $r=r_0$ if
$m(r_0)=r_0/2$. If $m(r)$ is increasing there is only one horizon, hence $S_{r_0}$
is the outermost horizon. We will use the charged Hawking
mass~\cite{disconzikhuri}
\begin{equation} \label{chm}
  m_{c}(r) = m(r)+\frac{q^2}{2r},
\end{equation}
where $q$ is the total charge. Note that on the horizon, we have
\[
  m_{c}(r_0) = \frac{1}{2} \left(r_0 + \frac{q^2}{r_0} \right),
\]
and at infinity $m_{c}$ tends to the ADM mass $\madm$. Clearly, if $m_c'<0$ for
\addtocounter{footnote}{1}
$r>r_0$, then~\eqref{mCRPI} will be violated. A violation of~\eqref{mCRPI} with
$|q|\leq\rho=r_0$ implies a violation of~\eqref{upper}. Thus, in order to
construct a counterexample it suffices to find two functions $m(r)$ and
$f(r)$, where $E=f \partial_r$, satisfying the following conditions:
\begin{enumerate}[A.]
 \item The charged dominant energy condition
\begin{equation} \label{cdec}
  R = \frac{4m'}{r^2} \geq 2|E|^2 = \frac{2f^2}{1-\frac{2m}{r}}.
\end{equation}
 \item Total charge is $q$: $r^2 f \to q$ as $r\to\infty$.
 \item The condition $|q|\leq r_0$.
\item Asymptotic flatness: $m\to \madm$ as $r\to\infty$.
 \item  The inequality
\begin{equation} \label{monotone}
  m_c'=m'-\frac{q^2}{2r^2} <0.
\end{equation}
\end{enumerate}

Choose $0<q<r_0$. We begin by solving
\[
  m' = \frac{q^2}{2r^2}\left(1-\frac{r_0}{r}\right), \quad m(r_0) = \frac{r_0}{2},
\]
so that $m'>0$ and~\eqref{monotone} is satisfied for $r>r_0$. One finds
\[
  m=\frac{r_0}{2} + \frac{q^2}{4r_0} - \frac{q^2}{2r} + \frac{q^2r_0}{4r^2}.
\]
Clearly asymptotic flatness is satisfied, and in fact
\[
  \madm = \frac{r_0}{2} + \frac{q^2}{4r_0}.
\]
Furthermore $2m'-1<0$, and consequently $2m<r$, for $r>r_0$.
Finally observe that
\[
  h(r) := \frac{2m'}{r^2}\left(1-\frac{2m}{r}\right) =
  \frac{q^2}{r^4}\left(1-\frac{r_0}{r}\right)
  \left(1-\frac{2m}{r}\right).
\]
It follows that we can find a function $f(r)$ satisfying $f(r_0)=0$, the
charged dominant energy condition $f(r)^2<h(r)$ for $r>r_0$, and $r^2f(r)\to q$ as
$r\to\infty$.
\end{proof}

\begin{proof}[Proof of Proposition~\ref{norigidity}]
Let $(M,g_{0},E_{0})$ denote the canonical slice of the Reissner-Nordstr\"{o}m spacetime,
and let $r$ denote the anisotropic radial coordinate for this data. Consider the annulus
$\Omega(r_{1},r_{2})$ where $r_{2}>r_{1}>m+\sqrt{m^{2}-q^{2}}$, and fix $f\in
C_{c}^{\infty}(\Omega(r_{1},r_{2}))$ to be nonnegative
and not identically zero. Since the scalar curvature $R_{g_{0}}$ is nonconstant in
$\Omega(r_{1},r_{2})$, the formal $L^{2}$-adjoint of the linearized scalar curvature
operator has trivial kernel. It then follows from \cite{corvino2000} that there exists a
smooth contravariant 2-tensor $h\in C_{c}^{\infty}(\Omega(r_{1},r_{2}))$, such that the
scalar curvature of the metric $\tilde{g}=g_{0}+\varepsilon h$ is given by
$R_{\tilde{g}}=R_{g_{0}}+\varepsilon f$ for $\varepsilon>0$ sufficiently small. It is
also clear from the proof in \cite{corvino2000} that if $f$ is chosen to be spherically
symmetric, then $h$ is also spherically symmetric.

Define $\tilde{E}=E_{0}+\varepsilon V$, and observe that
\begin{equation}\label{2}
|\tilde{E}|_{\tilde{g}}^{2}=|E_{0}|_{g_{0}}^{2}+2\varepsilon
g_{0}(E_{0},V)+\varepsilon h(E_{0},E_{0})
+O(\varepsilon^{2}).
\end{equation}
Since $E_{0}=-q\nabla r^{-1}$, the implicit function theorem may be used to find $V$ such
that $|\tilde{E}|_{\tilde{g}}^{2}=|E_{0}|_{g_{0}}^{2}$. Therefore
\begin{equation}\label{3}
R_{\tilde{g}}=R_{g_{0}}+\varepsilon f= 2|E_{0}|_{g_{0}}^{2}+\varepsilon f
=2|\tilde{E}|_{\tilde{g}}^{2}+\varepsilon f\geq2|\tilde{E}|_{\tilde{g}}^{2},
\end{equation}
so that the initial data set $(M,\tilde{g},\tilde{E})$ satisfies the charged
dominant energy condition.
Moreover, it is clear that this initial data set satisfies all desired hypotheses, except
possibly the outermost
condition for the minimal surface boundary. However, it can easily be seen that the
minimal boundary is outermost by choosing $\varepsilon$ sufficiently small. Namely, by
choosing $\varepsilon$ sufficiently small, the mean curvature of each coordinate sphere
$S_{r}$, $r>m+\sqrt{m^{2}-q^{2}}$, remains positive. Thus, the exterior region is
foliated
by surfaces of positive mean curvature, showing that no other minimal surface enclosing
the boundary can exist (by the maximum principle for minimal surfaces).
\end{proof}

\section{Other matter models} \label{matter}

\begin{proof}[Proof of Theorem~\ref{abelian-YM}]
Apply a rotation
\[
 (E_1,B_1,\dots,E_\ell,B_\ell) \mapsto (\tilde E_1,\tilde
 B_1,\dots,\tilde E_\ell,\tilde B_\ell),
\]
so that all the transformed charges $q_{\tilde E_i}$
and $q_{\tilde B_i}$ vanish except possibly $q_{\tilde E_1}$, and $q_{\tilde E_1} = q$.
If $q\geq \rho$ then the positive mass theorem with charge applies, and we
get~\eqref{upper} as explained in the introduction. Otherwise, starting with this new
data, one may apply the proof
in~\cite{khuriweinsteinyamadaCRPI}, conformally deforming all the electromagnetic fields
in the same manner $\tilde E_i(t)=u_t^{-6} \tilde E_i$, $\tilde B_i(t) = u_t^{-6} \tilde
B_i$. Both the charged conformal flow  and the inverse mean curvature flow
arguments now proceed as in Theorem~\ref{CRPI-charges},
and we obtain~\eqref{mCRPI}. If there are no charges and equality holds, we obtain as
in~\cite{khuriweinsteinyamadaCRPI} that the transformed data is a canonical
slice of \RN. After rotating back, we find that the original data is a canonical
slice of an EAYM \RN\ black hole.
\end{proof}

\begin{proof}[Discussion on Conjecture~\ref{dilaton}]
An initial data set for the EMD equations consists of $(M,g,k,E,B,\phi,\psi)$, where
$\psi$ is the initial time derivative of $\phi$. In the time-symmetric case $k$ and
$\psi$ vanish. We assume the data is strongly asymptotically flat, satisfies the EMD
dominant energy
condition~\eqref{dilaton-dec}, and that charges are absent,
$\div(e^{-2\phi}E)=\div B=0$ outside the horizon, or more
generally outside a compact set.

Our first observation is that a Penrose heuristic argument, as in Section~\ref{penrose},
is still valid. Start from data with paramaters $m_0$, $\rho_0$, $q_0$, and
assume the data settles to an EMD black hole as in~\eqref{DBH} with parameters $m$,
$\rho$, and $q$. Then, the charge is conserved $q=q_0$, the Hawking area theorem applies
giving $\rho_0\leq\rho$, and as before the masses satisfy $m\leq m_0$. Since for the
EMD black hole we have
\[
  m=\frac12 \sqrt{\rho^2+2q^2},
\]
it follows that
\[
  m_0 \geq m = \frac12 \sqrt{\rho^2+2q^2} \geq \frac12 \sqrt{\rho_0^2 + 2q_0^2},
\]
and hence~\eqref{dRPI} holds for the initial data.

We now point out a number of significant differences between this setting
and our work in~\cite{khuriweinsteinyamadaCRPI}. First, there are analogs of
the Majumdar-Papapetrou multiple neck solutions~\cite{garfinklehorowitzstrominger}, but
they do not pose an obstruction to the lower bound~\eqref{dRPI}. Indeed, the right
hand side of~\eqref{dRPI} is always monotone in $\rho$ unlike the right hand side
of~\eqref{mCRPI}, hence we do not expect an auxiliary inequality, such as the
{area/charge}
inequality in Theorem~\ref{thm-IMCF}, to be necessary to prove~\eqref{dRPI}. These
extremal solutions have, in a conformally related ``string'' metric,  asymptotically
cylindrical infinitely long necks just as in Majumdar-Papapetrou, but in the spacetime
metric these have cross sectional area tending to zero at a finite distance.
Since we have $\rho=0$ in this case, with $q\ne0$, there are EMD black holes with
nearby parameters that have a single connected component horizon, and charge to area
radius ratio arbitrarily small. Thus the
phenomenon described in Section~\ref{section-RPI} whereby excess charge leads to high
multiplicity of horizon components is absent in the EMD model. The
mathematical reason is that in the EMD case, to replace
the integral over $S$ with the integral over $S_\infty$ in
inequality~\eqref{stability}, would
require $e^{-\phi}E$ and $e^{-\phi} B$  to be divergence free rather than, as
is assumed, $e^{-2\phi} E$ and $B$.

Notwithstanding these observations, the many similarities we now outline, lead us to
surmise that the same approach used in~\cite{khuriweinsteinyamadaCRPI} should apply in
the EMD setting. First, a positive mass theorem with charge holds~\cites{rogatko,nozawa}.

\begin{theorem} \label{dilaton-PMT}
Let $(M,g,k,E,B,\phi,\psi)$ be a strongly asymptotically flat initial data set
for the EMD equations with an apparent horizon boundary, satisfying the
EMD dominant energy condition, with ADM mass $m$ and total charge $q$. If charge
densities $\div(e^{-2\phi} E)$ and
$\div B$ vanish, then
\[
  m \geq \frac1{\sqrt{2}} |q|.
\]
\end{theorem}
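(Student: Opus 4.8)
The plan is to prove Theorem~\ref{dilaton-PMT} by a Witten-type spinor argument adapted to the dilaton coupling, as in~\cite{rogatko,nozawa}; we indicate the main steps. Fix a spin structure on $M$ and, on the spinor bundle, introduce a modified (super-covariant) connection $\mathcal D_i=\nabla_i+\Omega_i$, where $\Omega_i$ is a Clifford-algebra-valued $1$-form built linearly from $k_{ij}$, from the weighted fields $e^{-\phi}E$ and $e^{-\phi}B$, and from $\nabla\phi$. The coefficients in $\Omega_i$ are dictated, as in the BPS analysis of the Gibbons black hole~\eqref{DBH}, by the requirement that the associated Dirac operator $D_{\mathcal D}=\gamma^i\mathcal D_i$ satisfy a Lichnerowicz-Weitzenb\"ock identity
\[
D_{\mathcal D}^{*}D_{\mathcal D}=\mathcal D^{*}\mathcal D+\mathcal R,
\]
in which the zeroth-order symmetric operator $\mathcal R$ is pointwise nonnegative exactly when the EMD constraint equations and dominant energy condition hold; in the time-symmetric case the latter is~\eqref{dilaton-dec}. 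The essential algebraic input is that the cross terms between $E$, $B$ and $\nabla\phi$ in $\mathcal R$ reorganize, by means of the dilaton field equation $\Box_{\g}\phi+\tfrac12 e^{-2\phi}F_{\mu\nu}F^{\mu\nu}=0$ and the Einstein constraints, into a sum of squares of Clifford-multiplication operators.

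Next I would solve the boundary value problem $D_{\mathcal D}\psi=0$ on $M$ with $\psi\to\psi_{0}$ a fixed constant spinor at the asymptotically flat end, and with a local (MIT-bag / chiral) boundary condition $\mathbb P_{\pm}\psi=0$ on the apparent horizon $\partial M$. Two points make this go through: first, the apparent-horizon condition (vanishing of the outgoing null expansion) is exactly what forces the Nester-Witten boundary integral contributed by $\partial M$ to have a favourable sign; second, the pointwise inequality $\mathcal R\ge0$ supplies the a~priori estimate under which $D_{\mathcal D}$, with these boundary conditions, is Fredholm of index zero with trivial kernel, so the problem is solvable. This is the standard elliptic package from charged positive mass theorems (cf.~\cite{GibbonsHawkingHorowitzPerry}), now carrying the extra dilaton terms.

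Integrating the Weitzenb\"ock identity against $\psi$ over $M$ and using $D_{\mathcal D}\psi=0$ gives
\[
0\le\int_{M}\bigl(|\mathcal D\psi|^{2}+\langle\mathcal R\psi,\psi\rangle\bigr)=\mathcal B_{\infty}-\mathcal B_{\partial M},
\]
with $\mathcal B_{\partial M}\le0$ by the boundary condition, where $\mathcal B_{\infty}$ is the boundary integral at the asymptotic end. It remains to evaluate $\mathcal B_{\infty}$: since $\phi\to0$ and $e^{-2\phi}E$ and $B$ decay like $r^{-2}$ with fluxes $q_{E}$ and $q_{B}$ respectively (here the hypothesis $\div(e^{-2\phi}E)=\div B=0$ identifies these fluxes with the total charges), and choosing $\psi_{0}$ to be a unit eigenspinor of the Clifford action of the charge direction, one finds $\mathcal B_{\infty}=c\bigl(m-\tfrac1{\sqrt2}|q|\bigr)|\psi_{0}|^{2}$ for a positive constant $c$. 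The coefficient $1/\sqrt2$, in place of the $1$ of the Einstein-Maxwell case, is produced by the weight $e^{-2\phi}$ in the definition of $q_{E}$ and in the field equation $\Db^{\mu}(e^{-2\phi}F_{\mu\nu})=0$. Combining the two displays yields $m\ge\tfrac1{\sqrt2}|q|$.

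The main obstacle is the first step: fixing the precise super-covariant connection $\Omega_i$ so that simultaneously $\mathcal R$ is manifestly nonnegative under~\eqref{dilaton-dec} (and its $k\ne0$ strengthening), which forces delicate cancellations among the dilaton-Maxwell cross terms, and the asymptotic term $\mathcal B_{\infty}$ comes out with exactly the coefficient $1/\sqrt2$. A secondary technical point is checking the sign of the apparent-horizon boundary term when charge flux threads $\partial M$, where again $\div(e^{-2\phi}E)=\div B=0$ is used. One could instead try to reduce to the time-symmetric case via Jang's equation and then run a purely Riemannian Witten argument, but the Jang graph does not obviously respect the weighted charge structure, so the direct Dirac-Witten approach seems preferable.
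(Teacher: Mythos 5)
The first thing to say is that the paper does not prove Theorem~\ref{dilaton-PMT} at all: it is imported as a known result, with the sentence ``a positive mass theorem with charge holds'' and a citation to \cite{rogatko} and \cite{nozawa}. So there is no in-paper proof to compare against line by line. Your outline is, however, the same strategy as those references: a Dirac--Witten argument with a dilaton-modified super-covariant connection, MIT-bag/chiral boundary conditions on the apparent horizon, and an evaluation of the Nester--Witten charge at infinity. In that sense you have correctly identified the route, and your sanity check on the coefficient is consistent with the paper: for the Gibbons solution~\eqref{DBH} one has $m=\tfrac12\sqrt{\rho^2+2q^2}$, so the extremal ($\rho=0$) case gives exactly $m=\tfrac1{\sqrt2}\lvert q\rvert$.

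That said, as a proof your proposal has a genuine gap, and you name it yourself: everything hinges on exhibiting the connection $\Omega_i$ explicitly and verifying (i) that the Weitzenb\"ock remainder $\mathcal R$ is nonnegative under the EMD constraints and energy condition, and (ii) that the asymptotic boundary term evaluates to $c\bigl(m-\tfrac1{\sqrt2}\lvert q\rvert\bigr)$. Those two computations \emph{are} the theorem --- the rest of your write-up is the standard elliptic and boundary-term package --- and deferring them to ``delicate cancellations among the dilaton--Maxwell cross terms'' means the argument is a roadmap rather than a proof. Two further cautions. First, for $k\neq0$ a Witten argument requires the full dominant energy condition $\mu_{\text{\tiny MD}}\geq|J|$, not merely the scalar inequality~\eqref{dilaton-dec}; the paper's statement of the theorem is for general (non--time-symmetric) data, so you must work with the strengthened condition throughout, not treat it as an aside. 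Second, the claim that the choice of $\Omega_i$ is ``dictated by the BPS analysis of the Gibbons black hole'' is circular as stated: the supercovariant connection must be derived from the (fake) supersymmetry transformations of the dilaton theory and then shown to reproduce the Gibbons solution as the case with parallel spinors, not the other way around. If you carry out the algebra of step one --- which is precisely what \cite{nozawa} does --- the rest of your argument goes through.
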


Observe that the initial data is not assumed to be time-symmetric. On the other hand, the
rigidity statement has not yet been proved. We suspect that in analogy with the EM
case, equality holds if and only if the data can be embedded as a Cauchy slice in one of
the GHS static extremal EMD solutions from~\cite{garfinklehorowitzstrominger}, and if the
data is time-symmetric equality holds if and only if that slice is the cannonical slice.
That said, contrary to~\cite{khuriweinsteinyamadaCRPI}, Theorem~\ref{dilaton-PMT}
likely plays no role in the proof of~\eqref{dRPI}.
Instead, we believe that an EMD charged Hawking mass $\md$ exists as
in~\cite{disconzikhuri} satisfying the following properties: (i) $\md$ is monotonically
non-decreasing under the Huisken-Ilmanen weak inverse mean curvature flow, and it is
constant only for the central spheres in the spherically symmetric EMD black hole; (ii)
$\md$ converges to $\frac12 \sqrt{\rho^2+2q^2}$ on the outermost horizon, and to the ADM
mass at infinity. This would prove Conjecture~\ref{dilaton}
including the rigidity statement, provided the horizon is connected and there are
no charges outside the horizon. Finally, we conjecture that there exists an EMD conformal
flow satisfying: (i) the EMD dominant energy condition is preserved under the flow; (ii)
the absence of charges is preserved under the flow; (iii) the ADM mass is monotonically
non-increasing under the flow and constant only for the spherically symmetric EMD
black holes; (iv) the area of
the boundary is constant under the flow; (v) the boundary eventually encloses any compact
set on the original manifold. This would prove the full conjecture. For (iii), a critical
ingredient, used in both~\cite{bray2001} and~\cite{khuriweinsteinyamadaCRPI}, namely a
conformal factor used to prove uniqueness in a doubling argument, is already
available~\cites{masood1993,marssimon2002}.
\end{proof}

\bibliographystyle{}

\begin{bibdiv}
\begin{biblist}

\bib{bray2001}{article}{
      author={Bray, H.L.},
       title={{Proof of the {R}iemannian {P}enrose inequality using the
  positive mass theorem}},
        date={2001},
        ISSN={0022-040X},
     journal={J. Differential Geom.},
      volume={59},
      number={2},
       pages={177\ndash 267},
         url={http://projecteuclid.org/getRecord?id=euclid.jdg/1090349428},
      review={\MR{1908823 (2004j:53046)}},
}

\bib{corvino2000}{article}{
      author={Corvino, J.},
       title={{Scalar curvature deformation and a gluing construction for the
  Einstein constraint equations}},
        date={2000},
        ISSN={0010-3616},
     journal={Comm.\ Math.\ Phys.},
      volume={214},
      number={1},
       pages={137\ndash 189},
         url={http://dx.doi.org/10.1007/PL00005533},
}

\bib{dainjaramilloreiris2012}{article}{
      author={Dain, S.},
      author={Jaramillo, J.L.},
      author={Reiris, M.},
       title={{Area-charge inequality for black holes}},
        date={2012},
     journal={Class.\ Quant.\ Grav.},
      volume={29},
       pages={035013},
      eprint={1109.5602},
}

\bib{dainkhuriweinsteinyamada}{article}{
      author={Dain, S.},
      author={Khuri, M.A.},
      author={Weinstein, G.},
      author={Yamada, S.},
       title={{Lower Bounds for the Area of Black Holes in Terms of Mass,
  Charge, and Angular Momentum}},
        date={2013},
     journal={Phys.\ Rev.},
      volume={D88},
       pages={024048},
      eprint={1306.4739},
}

\bib{disconzikhuri}{article}{
      author={Disconzi, M.M.},
      author={Khuri, M.A.},
       title={{On the {P}enrose inequality for charged black holes}},
        date={2012},
        ISSN={0264-9381},
     journal={Classical Quantum Gravity},
      volume={29},
      number={24},
       pages={245019, 18},
         url={http://dx.doi.org/10.1088/0264-9381/29/24/245019},
      review={\MR{3002957}},
}

\bib{garfinklehorowitzstrominger}{article}{
      author={Garfinkle, D.},
      author={Horowitz, G.T.},
      author={Strominger, A.},
       title={{Charged black holes in string theory}},
        date={1991May},
     journal={Phys. Rev. D},
      volume={43},
       pages={3140\ndash 3143},
         url={http://link.aps.org/doi/10.1103/PhysRevD.43.3140},
}

\bib{gibbons1982}{article}{
      author={Gibbons, G.W.},
       title={{Antigravitating black hole solitons with scalar hair in N=4
  supergravity}},
        date={1982},
     journal={Nucl.Phys.},
      volume={B207},
       pages={337\ndash 349},
}

\bib{GibbonsHawkingHorowitzPerry}{article}{
      author={Gibbons, G.W.},
      author={Hawking, S.W.},
      author={Horowitz, G.T.},
      author={Perry, M.J.},
       title={{Positive mass theorems for black holes}},
        date={1983},
        ISSN={0010-3616},
     journal={Comm. Math. Phys.},
      volume={88},
      number={3},
       pages={295\ndash 308},
         url={http://projecteuclid.org/getRecord?id=euclid.cmp/1103922377},
      review={\MR{701918 (84k:83015)}},
}

\bib{gibbonsmaeda}{article}{
      author={Gibbons, G.W.},
      author={Maeda, K.},
       title={{Black holes and membranes in higher dimensional theories with
  dilaton fields}},
        date={1988},
     journal={Nucl.\ Phys.},
      volume={B298},
       pages={741},
}

\bib{hawkingellis}{book}{
      author={Hawking, S.W.},
      author={Ellis, G.F.R.},
       title={{The large scale structure of space-time}},
   publisher={Cambridge University Press},
     address={London},
        date={1973},
        note={Cambridge Monographs on Mathematical Physics, No. 1},
      review={\MR{0424186 (54 \#12154)}},
}

\bib{horowitz}{inproceedings}{
      author={Horowitz, G.T.},
       title={{The dark side of string theory: black holes and black strings}},
   booktitle={{in Proceedings of the 1992 Trieste Spring School on String
  Theory and Quantum Gravity}},
}

\bib{huiskenilmanen2001}{article}{
      author={Huisken, G.},
      author={Ilmanen, T.},
       title={{The inverse mean curvature flow and the {R}iemannian {P}enrose
  inequality}},
        date={2001},
        ISSN={0022-040X},
     journal={J. Differential Geom.},
      volume={59},
      number={3},
       pages={353\ndash 437},
         url={http://projecteuclid.org/getRecord?id=euclid.jdg/1090349447},
      review={\MR{1916951 (2003h:53091)}},
}

\bib{jang1979}{article}{
      author={Jang, P.S.},
       title={{Note on cosmic censorship}},
        date={1979},
     journal={Phys.Rev.},
      volume={D20},
       pages={834\ndash 838},
}

\bib{khuriweinsteinyamadaPRD}{article}{
      author={Khuri, M.A.},
      author={Weinstein, G.},
      author={Yamada, S.},
       title={{On the Riemannian Penrose inequality with charge and the cosmic
  censorship conjecture}},
        date={2012},
     journal={Res. Inst. Math. Sci. Kokyuroku},
      number={1862},
       pages={63\ndash 66},
      eprint={1306.0206},
}

\bib{khuriweinsteinyamadaCRPI}{article}{
      author={Khuri, M.A.},
      author={Weinstein, G.},
      author={Yamada, S.},
       title={{Proof of the Riemannian Penrose inequality with charge for
  multiple black holes}},
        date={2014},
      eprint={1409.3271},
}

\bib{khuriweinsteinyamada2013}{inproceedings}{
      author={Khuri, M.A.},
      author={Weinstein, G.},
      author={Yamada, S.},
       title={{The Riemannian Penrose inequality with charge for multiple black
  holes}},
        date={2014},
   booktitle={{to appear in Proceedings of Complex Analysis and Dynamical
  System VI}},
}

\bib{malec}{article}{
      author={Malec, E.},
      author={Murchadha, N.{\'O}.},
       title={{Trapped surfaces and the Penrose inequality in spherically
  symmetric geometries}},
        date={1994Jun},
     journal={Phys. Rev. D},
      volume={49},
       pages={6931\ndash 6934},
         url={http://link.aps.org/doi/10.1103/PhysRevD.49.6931},
}

\bib{marssimon2002}{article}{
      author={Mars, M.},
      author={Simon, W.},
       title={{On uniqueness of static {E}instein-{M}axwell-dilaton black
  holes}},
        date={2002},
        ISSN={1095-0761},
     journal={Adv. Theor. Math. Phys.},
      volume={6},
      number={2},
       pages={279\ndash 305},
      review={\MR{1937857 (2003h:83081)}},
}

\bib{masood1993}{article}{
      author={Masood-ul Alam, A.K.M.},
       title={{Uniqueness of a static charged dilaton black hole}},
        date={1993},
     journal={Classical and Quantum Gravity},
      volume={10},
      number={12},
       pages={2649},
         url={http://stacks.iop.org/0264-9381/10/i=12/a=021},
}

\bib{nozawa}{article}{
      author={Nozawa, M.},
       title={{On the {B}ogomol'nyi bound in {E}instein-{M}axwell-dilaton
  gravity}},
        date={2011},
        ISSN={0264-9381},
     journal={Classical Quantum Gravity},
      volume={28},
      number={17},
       pages={175013, 28},
         url={http://dx.doi.org/10.1088/0264-9381/28/17/175013},
      review={\MR{2837310 (2012g:83012)}},
}

\bib{penrose1973}{article}{
      author={Penrose, R.},
       title={{Naked Singularities}},
        date={1973},
        ISSN={1749-6632},
     journal={Annals of the New York Academy of Sciences},
      volume={224},
      number={1},
       pages={125\ndash 134},
         url={http://dx.doi.org/10.1111/j.1749-6632.1973.tb41447.x},
}

\bib{penrose1982}{incollection}{
      author={Penrose, R.},
       title={{Some unsolved problems in classical general relativity}},
        date={1982},
   booktitle={{Seminar on {D}ifferential {G}eometry}},
      series={{Ann. of Math. Stud.}},
      volume={102},
   publisher={Princeton Univ. Press},
     address={Princeton, N.J.},
       pages={631\ndash 668},
      review={\MR{645761 (83c:83001)}},
}

\bib{rogatko}{article}{
      author={Rogatko, M.},
       title={{Positive mass theorem for black holes in {E}instein-{M}axwell
  axion-dilaton gravity}},
        date={2000},
        ISSN={0264-9381},
     journal={Classical Quantum Gravity},
      volume={17},
      number={1},
       pages={11\ndash 17},
         url={http://dx.doi.org/10.1088/0264-9381/17/1/302},
      review={\MR{1735863 (2001h:83056)}},
}

\bib{weinsteinyamada}{article}{
      author={Weinstein, G.},
      author={Yamada, S.},
       title={{On a {P}enrose inequality with charge}},
        date={2005},
        ISSN={0010-3616},
     journal={Comm. Math. Phys.},
      volume={257},
      number={3},
       pages={703\ndash 723},
         url={http://dx.doi.org/10.1007/s00220-005-1355-0},
      review={\MR{2164949 (2007c:83016)}},
}

\end{biblist}
\end{bibdiv}

\end{document}